
\documentclass[11pt,a4paper,twoside]{article}

\usepackage{amsmath,amsfonts,amssymb,amsthm}
\usepackage{algorithm,algorithmic}
\usepackage{graphicx}
\usepackage{prelim2e}
\usepackage{subfig}

\usepackage{color}
\definecolor{DarkBlue}{rgb}{0,0,0.75}
\definecolor{DarkRed}{rgb}{0.75,0,0}
\newcommand{\new}{}

\captionsetup{font={small}, margin=5ex}

\theoremstyle{plain}
  \newtheorem{theorem}{Theorem}
  \newtheorem*{theo}{Theorem}
  \newtheorem{corollary}[theorem]{Corollary}
  \newtheorem{lemma}[theorem]{Lemma}
  \newtheorem{proposition}[theorem]{Proposition}
\theoremstyle{definition}
  \newtheorem*{conjecture}{Conjecture}

  \newtheorem*{assumption}{Assumption A}
  \newtheorem*{convention}{Convention}
\theoremstyle{remark}
  \newtheorem*{remark}{Remark}
  \newtheorem{example}{Example}
\newcommand{\mdef}[1]{\textsl{#1}}

\newcommand{\abs}[1]{|#1|}
\newcommand{\set}[1]{\{#1\}}
\newcommand{\R}{\mathbb{R}}

\renewcommand{\vec}[1]{\boldsymbol{#1}}
\newcommand{\prv}{\vec{\pi}}
\newcommand{\prvt}{\widetilde{\prv}}
\newcommand{\vun}{\vec{1}}
\newcommand{\ens}[1]{\mathcal{#1}}
\newcommand{\intr}{{\ens I}}
\newcommand{\ext}{{\overline{\intr}}}
\newcommand{\out}{\mathrm{out(\intr)}}
\newcommand{\inn}{\mathrm{in(\intr)}}

\newcommand{\ch}[1]{\gets{#1}}
\newcommand{\cht}[1]{\widetilde\gets{#1}}
\newcommand{\dg}[1]{d_{#1}}
\newcommand{\dgt}[1]{\widetilde{d}_{#1}}
\newcommand{\maxV}{\ens V}
\newcommand{\maxVt}{\widetilde{\ens V}}

\DeclareMathOperator*{\argmin}{argmin}
\DeclareMathOperator*{\argmax}{argmax}

\title{Maximizing PageRank via outlinks}

\author{Cristobald de~Kerchove \and Laure~Ninove \and Paul Van~Dooren}

\date{\small{\emph{CESAME,
Universit{\'e} catholique de Louvain,\\
Avenue~Georges~Lema\^itre~4--6,
B-1348~Louvain-la-Neuve, Belgium\\
\{c.dekerchove, laure.ninove, paul.vandooren\}@uclouvain.be}}}

\begin{document}
\maketitle

\begin{abstract}
We analyze linkage strategies for a set $\intr$ of webpages for which the webmaster wants to maximize the sum of Google's PageRank scores. The webmaster can only choose the hyperlinks \emph{starting} from the webpages of $\intr$ and has no control on the hyperlinks from other webpages. We provide an optimal linkage strategy under some reasonable assumptions.

\vspace{1ex}
\noindent\emph{Keywords:} PageRank, Google matrix, Markov chain, Perron vector, Optimal linkage strategy

\vspace{1ex}
\noindent\emph{AMS classification:} 15A18, 15A48, 15A51, 60J15, 68U35
\end{abstract}

\section{Introduction}

PageRank, a measure  of webpages' relevance introduced by Brin and Page, is at the heart of the well known search engine Google~\cite{BP98,PBMW98}. Google classifies the webpages according to the pertinence scores given by PageRank, which are computed from the graph structure of the Web. A page with a high PageRank will appear among the first items in the list of pages corresponding to a particular query.

If we look at the popularity of Google, it is not surprising that some webmasters want to increase the PageRank of their webpages in order to get more visits from websurfers to their website. Since PageRank is based on the link structure of the Web, it is therefore useful to understand how addition or deletion of hyperlinks influence it.

Mathematical analysis of PageRank's sensitivity with respect to perturbations of the matrix describing the webgraph is a topical subject of interest (see for instance~\cite{AL06,BGS05,Kir06,LM04,LM06book,LM05} and the references therein). Normwise and componentwise conditioning bounds~\cite{Kir06} as well as the derivative~\cite{LM04,LM06book} are used to understand the sensitivity of the PageRank vector. It appears that the PageRank vector is relatively insensitive to small changes in the graph structure, at least when these changes concern webpages with a low PageRank score~\cite{BGS05,LM04}. One could think therefore that trying to modify its PageRank via changes in the link structure of the Web is a waste of time. However, what is important for webmasters is not the values of the PageRank vector but the \emph{ranking} that ensues from it. Lempel and Morel~\cite{LM05} showed that PageRank is not rank-stable, i.e.\ small modifications in the link structure of the webgraph may cause dramatic changes in the ranking of the webpages. Therefore, the question of how the PageRank of a particular page or set of pages could be increased--even slightly--by adding or removing links to the webgraph remains of interest.

As it is well known~\cite{AL04,IW06}, if a hyperlink from a page $i$ to a page $j$ is added, without no other modification in the Web, then the PageRank of $j$ will increase. But in general, you do not have control on the \emph{inlinks} of your webpage unless you pay another webmaster to add a hyperlink from his/her page to your or you make an \emph{alliance} with him/her by trading a link for a link~\cite{BYCL05,GGM05}. But it is natural to ask how you could modify your PageRank by yourself. This leads to analyze how the choice of the \emph{outlinks} of a page can influence its own PageRank. Sydow~\cite{Syd05} showed via numerical simulations that adding well chosen outlinks to a webpage may increase significantly its PageRank ranking. Avrachenkov and Litvak~\cite{AL06} analyzed theoretically the possible effect of new outlinks on the PageRank of a page and its neighbors. Supposing that a webpage has control only on its outlinks, they gave the optimal linkage strategy for this single page. Bianchini et al.~\cite{BGS05} as well as Avrachenkov and Litvak in~\cite{AL04} consider the impact of links between web communities (websites or sets of related webpages), respectively on the sum of the PageRanks and on the individual PageRank scores of the pages of some community. They give general rules in order to have a PageRank as high as possible but they do not provide an optimal link structure for a website.

\medskip
Our aim in this paper is to find a generalization of Avrachenkov--Litvak's optimal linkage strategy~\cite{AL06} to the case of \emph{a website with several pages}. \new{We consider a given set of pages and suppose we have only control on the \emph{outlinks} of these pages. We are interested in the problem of \emph{maximizing the sum of the PageRanks} of these pages.}

\medskip
Suppose $\mathcal{G}=(\ens N,\ens E)$ be the webgraph, with a set of nodes $\ens N=\set{1,\dots,n}$ and a set of links $\ens E\subseteq\ens N\times \ens N$. For a subset of nodes $\intr\subseteq\ens N$, we define
\begin{align*}
  \ens E_\intr&=\set{(i,j)\in\ens E\colon i,j\in\intr} \text{ the set of internal links},\\
  \ens E_\out&=\set{(i,j)\in\ens E\colon i\in\intr,j\notin\intr}
    \text{ the set of external outlinks},\\
  \ens E_\inn&=\set{(i,j)\in\ens E\colon i\notin\intr,j\in\intr}
    \text{ the set of external inlinks},\\
  \ens E_\ext&=\set{(i,j)\in\ens E\colon i,j\notin\intr} \text{ the set of external links}.
\end{align*}

\medskip
\new{If we do not impose any condition on $\ens E_\intr$ and $\ens E_\out$, the problem of maximizing the sum of the PageRanks of pages of $\intr$ is quite trivial and does not have much interest (see the discussion in Section~\ref{sec:optimal-linkage-strategy}). Therefore, when characterizing optimal link structures, we will make the following \emph{accessibility assumption}: every page of the website must have an access to the rest of the Web.}

\medskip
Our first main result concerns the \emph{optimal outlink structure} for a given website. In the case where the subgraph corresponding to the website is strongly connected, Theorem~\ref{thm:out-opt} can be particularized as follows.
\begin{theo}
  Let $\ens E_\intr$, $\ens E_\inn$ and $\ens E_\ext$ be given.
  Suppose that the subgraph $(\intr,\ens E_\intr)$ is strongly connected and
  $\ens E_\intr\neq\emptyset$.
  Then every optimal outlink structure $\ens E_\out$
  is to have only one outlink to a particular page outside of $\intr$.
\end{theo}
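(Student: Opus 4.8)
The plan is to work directly from the Google-matrix definition and to reduce the whole question to a monotone optimisation over an ``effective internal chain''. Write the Google matrix as $G=cP+(1-c)\vun v^{T}$, where $P$ is the row-normalised hyperlink matrix, $c\in(0,1)$ the damping factor and $v$ the (positive) teleportation distribution; PageRank is the stochastic solution of $\prv^{T}G=\prv^{T}$, equivalently $\prv^{T}(I-cP)=(1-c)v^{T}$, so the quantity to maximise is $\sigma=\prv^{T}\vun_{\intr}=(1-c)\,v^{T}(I-cP)^{-1}\vun_{\intr}$, where $\vun_{\intr}$ is the indicator of $\intr$. Since $\ens E_\intr$, $\ens E_\inn$ and $\ens E_\ext$ are frozen, only the rows of $P$ indexed by $\intr$ vary with the outlink structure, and they vary in two \emph{coupled} ways: the internal block is $cP_{\intr\intr}=cD^{-1}A$, with $A$ the fixed $0/1$ internal adjacency and $D=\operatorname{diag}(d_i)$, $d_i=\ell_i+k_i$ ($\ell_i$ internal links, $k_i$ external outlinks chosen by us); adding an external outlink simultaneously enlarges $D$ and populates the block $P_{\intr\ext}$. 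This coupling through the normalisation is what makes the problem nontrivial.

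First I would eliminate the external block by a Schur-complement (excursion) argument. Partitioning $I-cP$ along $(\intr,\ext)$ and introducing the fixed ``return operator'' $M=(I-cP_{\ext\ext})^{-1}cP_{\ext\intr}$ (the discounted probability, per exit node, of re-entering $\intr$), one obtains
\[
\sigma=u^{T}(I-H)^{-1}\vun,\qquad H:=cD^{-1}\bigl(A+A^{\mathrm{out}}M\bigr),
\]
where $A^{\mathrm{out}}$ is the chosen external-outlink adjacency and $u^{T}=(1-c)\bigl(v_{\intr}^{T}+v_{\ext}^{T}M\bigr)\ge 0$ is again independent of the outlinks. Here $H\ge 0$ has $\|H\|_{\infty}\le c<1$, so $(I-H)^{-1}=\sum_{k\ge 0}H^{k}\ge 0$ and $\sigma$ is \emph{entrywise nondecreasing} in $H$. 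The decisive scalar is the per-row mass: with no external outlink, row $i$ of $H$ sums to exactly $c$, whereas a single outlink to an external page $a$ makes it $c(\ell_i+r_a)/(\ell_i+1)<c$, because the total return mass $r_a=M_{a,\cdot}\vun$ is \emph{strictly} below $1$ (each step of $M$ carries a factor $c<1$). Thus every external outlink strictly leaks survival mass out of the internal chain.

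Next I would run a local-improvement argument to collapse any configuration to one link. Changing the outlinks of a single page $i$ alters only row $i$ of $H$, so a Sherman--Morrison update gives $\operatorname{sign}(\Delta\sigma)=\operatorname{sign}(w^{T}y)$, where $y=(I-H)^{-1}\vun\ge 0$ collects the discounted sojourn times in $\intr$, $w^{T}$ is the change of row $i$, and the scalar denominator stays positive because $I-H'$ is again nonsingular with nonnegative inverse. Deleting the outlink $(i,a)$ computes to $w^{T}y=\tfrac{c}{d_i-1}\bigl(\overline{y}_i-M_{a,\cdot}y\bigr)$, where $\overline{y}_i$ is the average, over the $d_i$ out-slots of page $i$, of the values $y_j$ attached to internal neighbours and $M_{a',\cdot}y$ attached to external targets; hence a deletion improves $\sigma$ exactly when the removed target is below average for the functional $a\mapsto M_{a,\cdot}y$. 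Accessibility forbids deleting the last outlink, but strong connectivity of $(\intr,\ens E_\intr)$ guarantees that a single surviving outlink lets every page reach the outside, and the strict inequality $r_a<1$ makes any configuration with two or more external outlinks strictly suboptimal; steering the surviving link to the external page maximising $M_{a,\cdot}y$ then forces every optimum to be exactly one outlink to one particular page. Finally I would read off the hypotheses: $\ens E_\intr\neq\emptyset$ with strong connectivity ensures $\ell_i\ge 1$ (no internal dangling node, $D$ invertible, $H$ well defined), and strong connectivity is precisely what makes one exit enough.

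The step I expect to be the real obstacle is the global control of this local scheme, because the sojourn vector $y$ itself depends on the current configuration: the ``average'' against which each target is compared shifts whenever a link is added or removed, so one cannot simply peel off the pointwise-smallest outlinks. I would handle it either by a monotone fixed-point argument (the improving moves converge to the single-link configuration, and at the optimum no move strictly helps), or by comparing the candidate single-link value with an arbitrary competitor directly through the resolvent identity and the strict substochasticity of $M$. Proving that the returned mass $M_{a,\cdot}y$ can never compensate, uniformly in the frozen external data, the internal mass that an extra outlink displaces is the crux, and it is exactly the inequality $r_a<1$ that must be made to carry this comparison.
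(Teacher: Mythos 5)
Your setup is, after unwinding, the same as the paper's: your sojourn vector $y=(I-H)^{-1}\vun$ is exactly the restriction to $\intr$ of the paper's $\vec v=(I-cP)^{-1}\vec e_\intr$ (compare equation~\eqref{eq:vintr-vext}), and your return functional satisfies $M_{a,\cdot}\,y=\vec v_a$ for $a\in\ext$, so the Schur complement buys notation rather than new content; the Sherman--Morrison sign criterion you invoke is Lemma~\ref{lem:pt-pi} and Theorem~\ref{thm:p>p-d>0}, and your ``below average'' deletion test is Proposition~\ref{prop:remove-link}. The genuine gap is exactly where you place it, and it is not closed by the inequality $r_a<1$. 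Strict substochasticity of the excursion operator says every external outlink leaks mass, but that argument, pushed to its conclusion, would recommend \emph{zero} outlinks (Proposition~\ref{prop:out0-best}); it cannot by itself decide whether, given that at least one outlink is mandatory, a second one hurts. What must actually be proved is a pointwise ordering: at an optimal configuration every external target satisfies $\vec v_a<\min_{k\in\intr}\vec v_k$, so that for any leaking node $i$ the child minimizing $\vec v$ is external and strictly below the average over $i$'s children, whence deleting that link is a strictly improving move which, by strong connectivity of $(\intr,\ens E_\intr)$, preserves accessibility. Your proposal names this comparison as ``the crux'' and offers two possible strategies for it without executing either, so the proof is not complete.

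The paper closes this with elementary facts about $\vec v$ rather than a fixed-point or uniform-comparison argument: Lemma~\ref{lem:minmaxv}(a) gives $\max_{j\notin\intr}\vec v_j\le c\,\max_{i\in\intr}\vec v_i$, part (c) gives $\vec v_i>\min_{j\ch{i}}\vec v_j$ for any $i\in\intr$ with access to $\ext$, and applying these at the node minimizing $\vec v$ over $\intr$, together with the fact that optimality forces the surviving target into $\maxV$, yields $\max_{k\in\ext}\vec v_k<\min_{k\in\intr}\vec v_k$ (inequality~\eqref{eq:maxbarIv<minIv}). Strictness of the improvement also needs the children of the leaking node not to be all of equal value; this is automatic once that ordering is in place, because $\ens E_\intr\neq\emptyset$ and strong connectivity give every node of $\intr$ an internal child. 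None of these steps is hard, but they are the substance of the theorem, and your outline currently asserts rather than derives them; in particular, be aware that the comparison cannot be made ``uniformly in the frozen external data'' independently of the configuration, since $\vec v$ (your $y$) changes with every move --- the paper avoids this by arguing only at a putative optimum and deriving a contradiction.
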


\medskip
We \new{are} also \new{interested in} the optimal \emph{internal} link structure for a website. In the case where there is a unique leaking node in the website, that is only one node linking to the rest of the web, Theorem~\ref{thm:intr-opt} can be particularized as follows.
\begin{theo}
  Let $\ens E_\out$, $\ens E_\inn$ and $\ens E_\ext$ be given.
  Suppose that there is only one leaking node in $\intr$.
  Then every optimal internal link structure $\ens E_\intr$ is composed of
  together with every possible backward link.
\end{theo}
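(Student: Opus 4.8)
The plan is to recast the objective as the value function of a discounted Markov reward process and then run a local exchange argument on individual internal links (equivalently one may specialize Theorem~\ref{thm:intr-opt}, but I give a direct argument). Writing the Google matrix as $G=cP+(1-c)\vun z^{T}$, with damping factor $c\in(0,1)$, row-stochastic link matrix $P$, and strictly positive zapping vector $z$, the PageRank solves $\prv^{T}=\prv^{T}G$ and hence $\prv^{T}=(1-c)\,z^{T}(I-cP)^{-1}$. Thus, with $\vec{e}_\intr$ the indicator vector of $\intr$, the quantity to maximize is $\sigma=\prv^{T}\vec{e}_\intr=(1-c)\,z^{T}v$, where $v:=(I-cP)^{-1}\vec{e}_\intr=\sum_{t\ge0}c^{t}P^{t}\vec{e}_\intr\ge0$ is the vector of expected discounted visits to $\intr$; for $i\in\intr$ it satisfies $v_i=1+c\,\overline v_i$, with $\overline v_i$ the average of $v$ over the out-neighbours of $i$ (external targets included).

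Since $(I-cP)^{-1}=\sum_{t}c^{t}P^{t}$ is monotone in $P$, rerouting probability mass from a lower- to a higher-$v$ target can only increase every entry of $v$; so it suffices to produce the internal structure that maximizes $v_i$ simultaneously for all $i$, and because $z>0$ that same structure is then the unique maximizer of $\sigma$. The key local computation signs the effect of a single link: adding $i\to j$ perturbs one row of $P$ by a rank-one term, so by the Sherman--Morrison formula the resulting change in $v$ has the sign of $v_j-\overline v_i=v_j-(v_i-1)/c$. Because there is a unique leaking node and, by the accessibility assumption, every page of $\intr$ reaches it, the walk always eventually escapes, whence $v_i<1/(1-c)$ and therefore $\overline v_i=(v_i-1)/c<v_i$ for every $i\in\intr$.

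I would then order $\intr$ by strictly decreasing value as $1,2,\dots,m$; the leaking node comes last, being the only node from which the walk can exit $\intr$ and hence the one of least value. A backward link $i\to j$ (with $j<i$) points to a node with $v_j>v_i>\overline v_i$, so it strictly raises $v_i$; consequently every optimal structure must contain all backward links. For the forward direction, node $i$ must emit at least one forward link, since otherwise it can reach only the positions above it and never the leaking node below, violating accessibility; the least harmful choice is $i\to i+1$, and no further forward link can help because in the resulting configuration every target $k>i+1$ satisfies $v_k<v_{i+1}\le\overline v_i$. This singles out precisely the forward path $1\to2\to\cdots\to m$ together with every backward link, as claimed.

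The step I expect to be the main obstacle is closing the loop between the ordering and the structure. The order is defined through $v$, which itself depends on the structure, so I must verify that the fixed point is consistent: in the path-plus-all-backward configuration the values must be genuinely strictly decreasing along $1,2,\dots,m$, with the leaking node attaining the strict minimum (the external target of value $0$ there is what forces it below the rest). This monotonicity is not a formal consequence of the link pattern---position $i+1$ in fact carries one more backward neighbour than position $i$---so I would prove it directly from the linear relations $v_i=1+c\,\overline v_i$, showing that $v_i-v_{i+1}>0$ propagates down the order. A second delicate point is that a harmful extra forward edge may not be freely removable without breaking accessibility, so ruling such edges out requires combining the sign lemma with the reachability constraint rather than a naive deletion. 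Finally, I must globalize the rank-one sign computations: a single exchange alters only one row but, through $(I-cP)^{-1}$, moves all of $v$, so I lean on the monotonicity of $(I-cP)^{-1}$ to be sure the local gains compound rather than cancel, and on $z>0$ to upgrade ``some $v_i$ strictly increases'' into ``$\sigma$ strictly increases'', which is what delivers the characterization of \emph{every} optimal structure.
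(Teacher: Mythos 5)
Your overall strategy is the paper's own: express $\prv^T\vec e_\intr$ through $\vec v=(I-cP)^{-1}\vec e_\intr$, sign rank-one row perturbations via Sherman--Morrison, and run an exchange argument. Your backward-link step is correct and is exactly Proposition~\ref{prop:add-link}: a missing link $(i,j)$ with $\vec v_j\ge\vec v_i>\overline{\vec v}_i$ can be added for a strict gain. But two gaps remain. First, the proposed reduction ``it suffices to produce the internal structure that maximizes $\vec v_i$ simultaneously for all $i$'' is unsound: no such structure exists in general. The paper's Example~\ref{ex:optimal-link-structure-conterexample} exhibits two admissible configurations of the claimed shape whose $\vec v$ vectors are incomparable entrywise, so that the optimal ordering of the chain genuinely depends on $\vec z$; you must argue directly about $\prv^T\vec e_\intr$ at an arbitrary optimum, as the paper does.

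Second, and more seriously, your treatment of the forward links only checks that the path-plus-all-backward configuration is \emph{locally} unimprovable (``no further forward link can help because $\vec v_k<\overline{\vec v}_i$''); it does not show that an optimal structure cannot have a single forward link that \emph{skips}, i.e.\ $(i,j)\in\ens E_\intr$ with $j>i+1$ and no link $(i,i+1)$, nor does it establish that at an optimum the values of the non-leaking nodes are pairwise distinct with the leaking node strictly last --- facts you assume when you ``order $\intr$ by strictly decreasing value.'' You correctly flag that deleting a bad forward edge may break accessibility, but you do not supply the resolution. The paper's missing ingredient is the claim illustrated in Figure~\ref{fig:intr-opt-fig3}: at an optimum, if $(k,i)\in\ens E_\intr$ with $k\in\intr\setminus\ens L$, $k\ne i$, and $i$ reaches $\ext$ by a path avoiding $k$ (such a path exists by Lemma~\ref{lem:decreasing-path}), then $\vec v_i<\vec v_k$ --- proved by removing $k$'s worst child via Proposition~\ref{prop:remove-link}, accessibility being preserved precisely because $i$ offers $k$ an alternative route. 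That single lemma simultaneously forces the strict ordering, places the leaking node at the bottom, and excludes skipping forward links; without it (or an equivalent), your argument characterizes a local optimum rather than \emph{every} optimal internal link structure.
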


\medskip
Putting together Theorems~\ref{thm:out-opt} and~\ref{thm:intr-opt}, we get in Theorem~\ref{thm:website-opt} the \emph{optimal link structure} for a website. This optimal structure is illustrated in Figure~\ref{fig:optimal-link-structure}.
\begin{theo}
  Let $\ens E_\inn$ and $\ens E_\ext$ be given.
  Then, for every optimal link structure,
  $\ens E_\intr$ is composed of a forward chain of links
  together with every possible backward link,
  and $\ens E_\out$ consists of a unique outlink, starting from the last node of the chain.
\end{theo}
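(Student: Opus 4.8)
The plan is to read any joint maximizer of the pair $(\ens E_\intr,\ens E_\out)$ as a configuration that is optimal in each of the two blocks of variables separately, and then to feed each block into the corresponding structural theorem. Encode the link structure as the rows of the Google matrix indexed by $\intr$, so that the outlinks of a page $i\in\intr$ determine row $i$; the remaining rows are frozen by the data $\ens E_\inn,\ens E_\ext$, and we maximize $\sum_{i\in\intr}\pi_i$ over the admissible rows. If $(\ens E_\intr^\ast,\ens E_\out^\ast)$ is a joint maximizer, then freezing $\ens E_\intr$ at $\ens E_\intr^\ast$ leaves $\ens E_\out^\ast$ optimal for the outlink block, and freezing $\ens E_\out$ at $\ens E_\out^\ast$ leaves $\ens E_\intr^\ast$ optimal for the internal block. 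I emphasize that this is not an alternating improvement scheme (whose fixed points need not be global optima) but simply the restriction of one global maximizer to each coordinate; the two theorems will then apply to the two blocks as soon as their hypotheses are seen to hold at this maximizer.

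The heart of the argument is that, once one of the two structural hypotheses is known at the maximizer, the other follows and the full description cascades. Suppose it has been shown that $\intr$ has a single leaking node $\ell$, i.e.\ that $\ens E_\out^\ast$ reduces to one outlink. Then the hypothesis of Theorem~\ref{thm:intr-opt} holds, so $\ens E_\intr^\ast$ consists of a forward chain together with every possible backward link, the chain being oriented so that its terminal node is the leaking node $\ell$. A forward chain with all its backward links is strongly connected and nonempty, so the hypothesis of Theorem~\ref{thm:out-opt} now holds for $\ens E_\intr^\ast$, and that theorem confirms that the optimal outlink block is a single outlink; since its source is by definition the unique leaking node, it leaves from $\ell$, that is, from the last node of the chain. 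This is exactly the structure asserted in Theorem~\ref{thm:website-opt}. The same cascade runs in the other order: if instead strong connectivity of $(\intr,\ens E_\intr^\ast)$ is established first, Theorem~\ref{thm:out-opt} gives a single outlink and hence a single leaking node, and Theorem~\ref{thm:intr-opt} then supplies the chain-with-backward-links structure, again ending at the source of the outlink.

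Everything therefore reduces to one step, which I expect to be the main obstacle: establishing at the joint maximizer, and without presupposing the other, that either $(\intr,\ens E_\intr^\ast)$ is strongly connected or $\ens E_\out^\ast$ is a single outlink. The difficulty is genuinely a circularity, since the particularized forms of Theorems~\ref{thm:out-opt} and~\ref{thm:intr-opt} each assume precisely what one would like to bootstrap from the other. I would break it by invoking the two results in their full generality, before particularization: their characterizations of the optimal outlink and internal structures do not presuppose strong connectivity or a single leaking node, and feeding the general outlink characterization into the maximizer already forces $\ens E_\out^\ast$ to be a single link, after which the cascade above applies verbatim. Should one wish to argue from the particularized statements only, the circularity can instead be cut by an exchange argument anchored on the accessibility assumption: accessibility guarantees at least one leaking node, so that $\ens E_\out^\ast\neq\emptyset$ and the trivial no-leak configuration is excluded, and one then shows that concentrating the escaping flow onto a single outlink, and adding internal links so as to merge the strongly connected components of $(\intr,\ens E_\intr^\ast)$, can only increase $\sum_{i\in\intr}\pi_i$; at a maximizer both operations must already be saturated, which yields simultaneously a single leaking node and strong connectivity.
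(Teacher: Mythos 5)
Your overall architecture --- restrict the joint maximizer to each block so that each of $\ens E_\intr^\ast$ and $\ens E_\out^\ast$ is separately optimal, then cascade between Theorems~\ref{thm:out-opt} and~\ref{thm:intr-opt} --- is exactly the paper's, and the conditional cascade in your second paragraph is sound. The gap is in the step you yourself identify as the crux. You propose to break the circularity by claiming that ``feeding the general outlink characterization into the maximizer already forces $\ens E_\out^\ast$ to be a single link.'' This is false: the general Theorem~\ref{thm:out-opt} only yields one outlink \emph{per final class with internal edges} of $(\intr,\ens E_\intr)$, so if $(\intr,\ens E_\intr^\ast)$ had $r>1$ final classes the optimal $\ens E_\out$ could contain several links --- Example~\ref{ex:out-opt-ex} exhibits optimal outlink structures with two and three external outlinks for an internal structure with two final classes. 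Nothing in the outlink theorem alone rules out several final classes, so your proposed entry point does not get off the ground.

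The correct entry point is the general \emph{internal} characterization. Theorem~\ref{thm:intr-opt}, which assumes nothing about the number of leaking nodes, already forces $\ens E_\intr^\ast\supseteq\ens E_\intr^L$, hence every backward link $(i,j)$ with $j\le i$ is present and every node of $\intr$ has an access (inside the subgraph) to the node of $\intr$ with maximal $\vec v$. Since a final class has no access to the rest of $\intr$, every final class must contain that node, and final classes being disjoint, it is unique; it also has internal edges (it contains a self-link). Only now does Theorem~\ref{thm:out-opt} give $\abs{\ens E_\out^\ast}=1$, hence a single leaking node, and a second application of Theorem~\ref{thm:intr-opt} delivers the forward chain terminating at that leaking node. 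Your fallback exchange argument (concentrating the escaping flow and merging components) is not developed and is not needed once the bootstrapping starts from the internal theorem.
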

  \begin{figure}[!hbt]
    \begin{center}
    \includegraphics{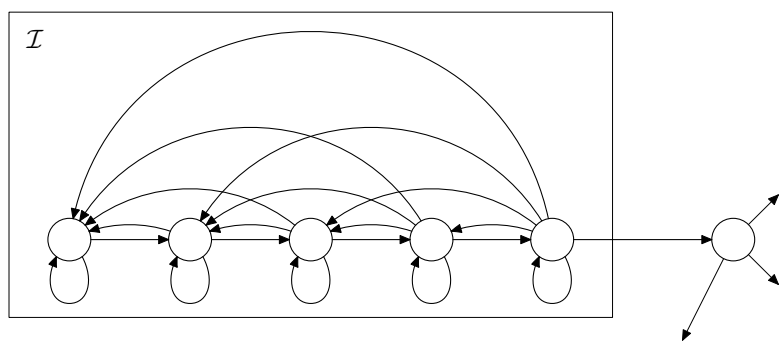}
    \caption{
      Every optimal linkage strategy for a set $\intr$ of five pages
      must have this structure.}
    \label{fig:optimal-link-structure}
    \end{center}
  \end{figure}

\medskip
This paper is organized as follows.  In the following preliminary section, we recall some graph concepts as well as the definition of the PageRank, and we introduce some notations. In Section~\ref{sec:PR-set}, we develop tools for analysing the PageRank of a set of pages~$\intr$. Then we come to the main part of this paper: in Section~\ref{sec:optimal-linkage-strategy} we provide the optimal linkage strategy for a set of nodes. In Section~\ref{sec:extensions-variants}, we give some extensions and variants of the main theorems.
We end this paper with some concluding remarks.

\section{Graphs and PageRank}\label{sec:preliminaries}

Let $\mathcal G=(\ens N,\ens E)$ be a directed graph representing the Web. The webpages are represented by the set of nodes $\ens N=\set{1,\dots,n}$ and the hyperlinks are represented by the set of directed links $\ens E\subseteq\ens N\times \ens N$. That means that $(i,j)\in\ens E$ if and only if there exists a hyperlink linking page~$i$ to page~$j$.

Let us first briefly recall some usual concepts about directed graphs (see for instance~\cite{BP94}).
A link $(i,j)$ is said to be an \mdef{outlink} for node $i$ and an \mdef{inlink} for node $j$.
If $(i,j)\in\ens E$, node $i$ is called a \mdef{parent} of node $j$. By \[j\ch{i},\] we mean that $j$ belongs to the set of \mdef{children} of $i$, that is $j\in\set{k\in\ens N\colon(i,k)\in\ens E}$.
The \mdef{outdegree} $\dg{i}$ of a node $i$ is its number of children, that is
\[ \dg{i}=\abs{\set{j\in\ens N\colon (i,j)\in\ens E}}. \]
A \mdef{path} from $i_0$ to $i_s$ is a sequence of nodes $\langle i_0,i_1,\dots,i_s\rangle$ such that $(i_k,i_{k+1})\in\ens E$ for every $k=0,1,\dots,s-1$. A node $i$ \mdef{has an access to a node} $j$ if there exists a path from $i$ to $j$. In this paper, we will also say that a node $i$ \mdef{has an access to a set} $\ens J$ if $i$ has an access to at least one node $j\in\ens J$.  The graph $\ens G$ is \emph{strongly connected} if every node \new{of} $\ens N$ has an access to every other node \new{of} $\ens N$.
\new{A set of nodes $\ens F\subseteq\ens N$ is a \mdef{final class} of the graph $\mathcal G=(\ens N,\ens E)$ if the subgraph $(\ens F,\ens E_{\ens F})$ is strongly connected and moreover $\ens E_{\mathrm{out}(\ens F)}=\emptyset$ (i.e.\ nodes of $\ens F$ do not have an access to $\ens N\setminus\ens F$).}

\medskip
Let us now briefly introduce the PageRank score (see~\cite{BGS05,BP98,LM04,LM06book,PBMW98} for background). Without loss of generality (please refer to the book of Langville and Meyer~\cite{LM06book} or the survey of Bianchini et al.~\cite{BGS05} for details), we can make the assumption that \emph{each node has at least one outlink}, i.e.\ $\dg{i}\neq0 \text{ for every } i\in\ens N$.
Therefore the $n\times n$ stochastic matrix $P=[P_{ij}]_{i,j\in\ens N}$ given by
\[
  P_{ij}=\begin{cases}
    {\dg{i}}^{-1} &\text{if }(i,j)\in\ens E,\\
    0 &\text{otherwise},
  \end{cases}
\]
is well defined and is a scaling of the adjacency matrix of $\mathcal G$.
Let also $0<c<1$ be a \mdef{damping factor} and $\vec z$ be a positive stochastic \mdef{personalization vector}, i.e.\ \new{$\vec z_i>0$ for all $i=1,\dots,n$ and} $\vec z^T\vun=1$,
where $\vun$ denotes the vector of all ones.
The \mdef{Google matrix} is then defined as
\[ G=cP+(1-c)\vun\vec z^T. \]
Since $\vec z>0$ and $c<1$, this stochastic matrix is positive, \new{i.e.\ $G_{ij}>0$ for all $i,j$}.  The \mdef{PageRank vector} $\prv$ is then defined as the unique invariant measure of the matrix $G$, \new{that is} the unique left Perron vector of $G$,
\begin{equation}\label{eq:PR-definition}
\begin{split}
  \prv^T &=\prv^TG,\\
  \prv^T\vun&=1.
\end{split}
\end{equation}
The \mdef{PageRank of a node} $i$ is the $i^\text{th}$ entry $\prv_i=\prv^T\vec e_i$ of the PageRank vector.

The PageRank vector is usually interpreted as the stationary distribution of the following Markov chain (see for instance~\cite{LM06book})\new{: a} random surfer moves on the webgraph, using hyperlinks between pages with a probability $c$ and \mdef{zapping} to some new page according to the personalization vector with a probability $(1-c)$. The Google matrix $G$ is the probability transition matrix of this random walk.
In this stochastic interpretation, the PageRank of a node is equal to the inverse of its mean return time, that is \new{$\prv_i^{-1}$} is the mean number of steps a random surfer starting in node~$i$ will take for coming back to~$i$  (see~\cite{CM00,KS60}).

\section{PageRank of a website}\label{sec:PR-set}

We are interested in characterizing the \mdef{PageRank of a set}~$\intr$. We define this as the sum
\[\prv^T\vec e_\intr=\sum_{i\in\intr}\prv_i,\]
where $\vec e_\intr$ denotes the vector with a $1$ in the entries of $\intr$ and $0$ elsewhere. Note that the PageRank of a set corresponds to the notion of energy of a community in~\cite{BGS05}.

Let $\intr\subseteq\ens N$ be a subset of the nodes of the graph.
The PageRank of $\intr$ can be expressed as $\prv^T\vec e_\intr=(1-c)\vec z^T(I-cP)^{-1}\vec e_\intr$ from PageRank equations~\eqref{eq:PR-definition}.
Let us then define the vector
\begin{equation}\label{eq:v-definition} \vec v = (I-cP)^{-1}\vec e_\intr.\end{equation}
With this, we have the following expression for the PageRank of the set~$\intr$:
\begin{equation}\label{eq:PR-zv} \prv^T\vec e_\intr=(1-c)\vec z^T\vec v.\end{equation}

The vector $\vec v$ will play a crucial role throughout this paper.  In this section, we will first present a probabilistic interpretation for this vector and prove some of its properties. We will then show how it can be used in order to analyze the influence of some page $i\in\intr$ on the PageRank of the set $\intr$. We will end this section by briefly introducing the concept of \new{basic} absorbing graph, which will be useful in order to analyze optimal linkage strategies under some assumptions.

\subsection{Mean number of visits before zapping}

Let us first see how the entries of the vector $\vec v=(I-cP)^{-1}\vec e_\intr$ can be interpreted. Let us consider a random surfer on the webgraph $\ens G$ that, as described in Section~\ref{sec:preliminaries}, follows the hyperlinks of the webgraph with a probability $c$. But, instead of zapping to some page of $\ens G$ with a probability $(1-c)$, he \emph{stops} his walk with probability $(1-c)$ at each step of time. This is equivalent to consider a random walk on
the extended graph $\mathcal{G}_e=(\ens N\cup\set{n+1},\ens E\cup\set{(i,n+1)\colon i\in\ens N})$ with a transition probability matrix
\[ P_e = \begin{pmatrix}
     cP&(1-c)\vun\\0&1
  \end{pmatrix}.
\]
At each step of time, with probability $1-c$, the random surfer can \emph{disappear} from the original graph, that is he can reach the absorbing node $n+1$.

The nonnegative matrix $(I-cP)^{-1}$ is commonly called the fundamental matrix of the absorbing Markov chain defined by $P_e$ (see for instance~\cite{KS60,Sen81}). In the extended graph~$\mathcal{G}_e$, the entry $[(I-cP)^{-1}]_{ij}$ is the expected number of visits to node~$j$ before reaching the absorbing node~$n+1$ when starting from node~$i$.
From the point of view of the standard random surfer described in Section~\ref{sec:preliminaries}, the entry $[(I-cP)^{-1}]_{ij}$ is the expected number of visits to node~$j$ before zapping for the first time when starting from node~$i$.

Therefore, the vector $\vec v$ defined in equation~\eqref{eq:v-definition} has the following probabilistic interpretation. The entry~$\vec v_i$ is the \emph{expected number of visits to the set~$\ens{I}$ before zapping} for the first time when the random surfer starts his walk in node~$i$.

\medskip
Now, let us first prove some simple properties about this vector.
\begin{lemma}\label{lem:minmaxv}
  Let $\vec v\in\R^n_{\ge0}$ be defined by $\vec v=cP\vec v+\vec e_\intr$.
  Then,
  \begin{itemize}
    \item[$(a)$] $\max_{i\notin\ens{I}}\vec v_i \le c\,\max_{i\in\ens{I}} \vec v_i$,
    \item[$(b)$] $\vec v_i\le 1+c\,\vec v_i$ for all $i\in\ens N$;
          with equality if and only if
          the node $i$ does not have an access to $\ext$,
    \item[$(c)$] $\vec v_i\ge\min_{j\ch{i}}\vec v_j$ for all $i\in\intr$;
          with equality if and only if
          the node $i$ does not have an access to $\ext$;
  \end{itemize}
\end{lemma}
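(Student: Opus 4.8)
The plan is to argue entirely from the componentwise form of $\vec v=cP\vec v+\vec e_\intr$, namely
\[
  \vec v_i=\frac{c}{\dg{i}}\sum_{j\ch{i}}\vec v_j+(\vec e_\intr)_i,
\]
abbreviating by $a_i=\dg{i}^{-1}\sum_{j\ch{i}}\vec v_j$ the average of the nonnegative vector $\vec v$ over the children of $i$, so that $\vec v_i=c\,a_i+(\vec e_\intr)_i$. The whole lemma is a discrete maximum-principle computation around this averaging identity (the probabilistic reading of $\vec v_i$ as the expected number of visits to $\intr$ before zapping supplies the intuition, but I would keep everything algebraic). First I would record that the maximum $M=\max_{i\in\ens N}\vec v_i$ is attained inside $\intr$: a maximizer $i^\ast\notin\intr$ would satisfy $\vec v_{i^\ast}=c\,a_{i^\ast}\le cM<M$ since $c<1$ and $M\ge1>0$ (as $\vec v_i\ge1$ on the nonempty $\intr$), which is absurd. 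Hence $\max_{i\in\intr}\vec v_i=M$, and part $(a)$ is immediate since $\vec v_i=c\,a_i\le cM$ for every $i\notin\intr$.

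For $(b)$ I would first bound $M$: choosing a maximizer $i^\ast\in\intr$ gives $M=1+c\,a_{i^\ast}\le 1+cM$, so $M\le(1-c)^{-1}$, and since $\vec v_i\le M$ for every $i$ this is exactly $\vec v_i\le1+c\vec v_i$. To settle the equality case I would compare the level set $S=\set{i:\vec v_i=(1-c)^{-1}}$ with the combinatorial set $B=\set{i:i\text{ has no access to }\ext}$ and prove $S=B$. That $S\subseteq\intr$ follows because $i\in S\cap\ext$ would give $(1-c)^{-1}=\vec v_i=c\,a_i\le c(1-c)^{-1}$; that $S$ is closed under children follows because for $i\in S\subseteq\intr$ the identity $(1-c)^{-1}=1+c\,a_i$ forces $a_i=(1-c)^{-1}$, which, as each $\vec v_j\le(1-c)^{-1}$, is possible only if every child of $i$ lies in $S$. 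These two facts give $S\subseteq B$: a path witnessing access from $i\in S$ to $\ext$ would stay in $S$ and terminate in the empty set $S\cap\ext$. For $B\subseteq S$ I would use that $B\subseteq\intr$ is also closed under children (a child of a node without access to $\ext$ has no such access either), so minimizing $\vec v$ over $B$ at some $i_0$ yields $\vec v_{i_0}=1+c\,a_{i_0}\ge 1+c\vec v_{i_0}$, forcing $\vec v_{i_0}\ge(1-c)^{-1}$ and hence $\vec v\equiv(1-c)^{-1}$ on $B$. Thus $S=B$, which is the equality statement of $(b)$.

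Part $(c)$ rides on the same estimate. For $i\in\intr$ set $m=\min_{j\ch{i}}\vec v_j$; by $(b)$ we have $m\le(1-c)^{-1}$, so
\[
  \vec v_i=1+c\,a_i\ge 1+cm\ge m,
\]
the last step being $1-(1-c)m\ge0$. If equality $\vec v_i=m$ holds, then $\vec v_i\le a_i$ (a minimum never exceeds the average), so $1+c\,a_i=\vec v_i\le a_i$ gives $a_i\ge(1-c)^{-1}$, hence $a_i=(1-c)^{-1}$; then every child of $i$ lies in $S=B$ and $\vec v_i=(1-c)^{-1}$, so $i\in B$. Conversely, if $i\in B$ all its children belong to $B$ and carry the value $(1-c)^{-1}$, so $m=(1-c)^{-1}=\vec v_i$ and equality holds.

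I expect the inequalities to be routine, each being a one-line averaging estimate fed by the bound $\vec v_i\le(1-c)^{-1}$; the real work, and the only genuinely delicate point, is matching the analytic level set $S$ with the combinatorial set $B$. The crux is the pair of propagation facts—closedness of both $S$ and $B$ under passage to children—combined with the short path argument that converts closedness into the two set inclusions. Everything else is bookkeeping around the fixed-point identity.
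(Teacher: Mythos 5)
Your proof is correct and follows essentially the same route as the paper: the componentwise averaging identity $\vec v_i=c\,a_i+(\vec e_\intr)_i$, the observation that the global maximum is attained in $\intr$, the bound $\vec v_i\le\frac{1}{1-c}$, and propagation of the equality case along children. You are in fact slightly more explicit than the paper on the converse direction of the equality characterization in $(b)$ (the inclusion $B\subseteq S$, which the paper leaves implicit), but the underlying argument is the same.
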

\begin{proof}
  \begin{itemize}
    \item[$(a)$]  Since $c<1$, for all $i\notin\intr$,
  \[
    \max_{i\notin\intr}\vec v_i
    =\max_{i\notin\intr}\bigg(c\sum_{j\ch{i}}\frac{\vec v_j}{\dg{i}}\bigg)
    \le c\max_j\vec v_j.
  \]
  Since $c<1$, it then follows that $\max_{j} \vec v_j = \max_{i\in\intr} \vec v_i$.
  \item[$(b)$] The inequality $\vec v_i\le \frac{1}{1-c}$ follows directly from
  \[
    \max_i\vec v_i
    \le\max_i\bigg(1+c\sum_{j\ch{i}}\frac{\vec v_j}{\dg{i}}\bigg)
    \le 1+c\max_j\vec v_j.
  \]
  From~$(a)$ it then also follows that $\vec v_i\le \frac{c}{1-c}$ for all $i\notin\intr$.
  Now, let $i\in\ens N$ such that $\vec v_i= \frac{1}{1-c}$.
  Then $i\in\intr$.
  Moreover,
  \[
     1+c\,\vec v_i=\vec v_i=1+c\sum_{j\ch{i}}\frac{\vec v_j}{\dg{i}},
  \]
  that is $\vec v_j= \frac{1}{1-c}$ for every $j\ch{i}$.
  Hence node $j$ must also belong to~$\intr$. By induction,
  every node $k$ such that $i$ has an access to $k$
  must belong to~$\intr$.
  \item[$(c)$] Let $i\in\intr$. Then, by~$(b)$
  \[
     1+c\,\vec v_i\ge\vec v_i=1+c\sum_{j\ch{i}}\frac{\vec v_j}{\dg{i}}
     \ge 1+c\min_{j\ch{i}}\new{\vec v_j},
  \]
  so $\vec v_i\ge\min_{j\ch{i}}\vec v_j$ for all $i\in\intr$.
  If $\vec v_i=\min_{j\ch{i}}\vec v_j$ then also $1+c\,\vec v_i=\vec v_i$ and hence, by~$(b)$,
  the node $i$ does not have an access to $\ext$.
  \qedhere
  \end{itemize}
\end{proof}

Let us denote the set of nodes \new{of} $\ext$ which on average give the most visits \new{to} $\intr$ before zapping by
  \[ \maxV =\argmax_{j\in\ext}\vec v_j .\]
Then the following lemma is quite intuitive. It says that, among the nodes \new{of} $\ext$, those which provide the higher mean number of visits \new{to $\intr$ are parents of $\intr$, i.e.\ parents of some node of $\intr$}.
\begin{lemma}[Parents of $\intr$]\label{lem:vmax-parents}
If $\ens E_\inn\neq\emptyset$, then
\[ \maxV \subseteq
    \set{j\in\ext\colon 
    \text{ there exists } \ell\in\intr
    \text{ such that } (j,\ell)\in\ens E_\inn}.
\]
If $\ens E_\inn=\emptyset$, then $\vec v_j=0$ for every $j\in\ext$.
\end{lemma}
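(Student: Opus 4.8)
The plan is to exploit the defining relation in its coordinate form: for a node $j\in\ext$ we have $(\vec e_\intr)_j=0$, so $\vec v=cP\vec v+\vec e_\intr$ reads
\[
  \vec v_j=c\sum_{k\ch j}\frac{\vec v_k}{\dg j},
\]
which exhibits $\vec v_j$ as $c$ times a weighted average of the values of $\vec v$ at the children of $j$. Since $\vec v\ge0$ (indeed $\vec v=(I-cP)^{-1}\vec e_\intr=\sum_{k\ge0}c^kP^k\vec e_\intr$), this averaging identity is essentially the only ingredient I expect to need.

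First I would dispose of the case $\ens E_\inn=\emptyset$. Here there is no link from $\ext$ into $\intr$, so no node of $\ext$ has any access to $\intr$, since any path reaching $\intr$ would have to traverse an edge of $\ens E_\inn$. Probabilistically, a surfer starting in $\ext$ then never visits $\intr$ before zapping, whence $\vec v_j=0$ for all $j\in\ext$. Equivalently, $\vec v$ restricted to $\ext$ solves $(I-cP_{\ext\ext})\vec v|_\ext=0$, and $I-cP_{\ext\ext}$ is nonsingular because $\norm{cP_{\ext\ext}}_\infty\le c<1$.

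For the main case $\ens E_\inn\neq\emptyset$, the first step is to check that $M:=\max_{j\in\ext}\vec v_j>0$. Choosing an inlink $(j_0,\ell_0)\in\ens E_\inn$, we have $\ell_0\in\intr$, so $\vec v_{\ell_0}\ge1$ from the defining relation, and therefore $\vec v_{j_0}\ge c\,\vec v_{\ell_0}/\dg{j_0}>0$; thus $M\ge\vec v_{j_0}>0$. The second step is the averaging argument: take $j^\ast\in\maxV$, so that $M=\vec v_{j^\ast}=c\sum_{k\ch{j^\ast}}\vec v_k/\dg{j^\ast}$, i.e.\ the average of $\vec v$ over the children of $j^\ast$ equals $M/c>M$. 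An average strictly exceeding $M$ forces at least one child $k$ with $\vec v_k\ge M/c>M$; since $M$ is the maximum of $\vec v$ over $\ext$, this child cannot lie in $\ext$, hence $k\in\intr$ and $(j^\ast,k)\in\ens E_\inn$. This shows $j^\ast$ is a parent of $\intr$, which is exactly the claimed inclusion.

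The only genuinely delicate point is establishing $M>0$ in the second case, that is, ruling out the degenerate all-zero situation in which the strict inequality $c<1$ would no longer bite. Once positivity of $M$ is secured, the strict-inequality step is immediate, and the whole argument stays elementary.
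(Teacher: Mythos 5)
Your proof is correct and follows essentially the same route as the paper's: both rest on the identity $\vec v_{j}=c\sum_{k\ch{j}}\vec v_k/\dg{j}$ for $j\in\ext$, concluding that a maximizer over $\ext$ must have a child in $\intr$ because $c<1$. Your version is written directly rather than by contradiction and has the small merit of explicitly justifying $\max_{j\in\ext}\vec v_j>0$, which the paper invokes without comment.
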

\begin{proof}
  Suppose first that $\ens E_\inn\neq\emptyset$.
  Let $k\in\maxV$
  with $\vec v=(I-cP)^{-1}\vec e_\intr$.
  If we supposed that there does not exist $\ell\in\intr$
  such that $(k,\ell)\in\ens E_\inn$, then we would have, since $\vec v_k>0$,
  \[
    \vec v_k=c\sum_{j\ch{k}}\frac{\vec v_j}{\dg{k}}
    \le c\max_{j\notin\intr}\vec v_j=c\vec v_k<\vec v_k,
  \]
  which is a contradiction.
  Now, if $\ens E_\inn=\emptyset$, then \new{there is no access to~$\intr$ from~$\ext$, so}
  clearly $\vec v_j=0$ for every $j\in\ext$.
\end{proof}
Lemma~\ref{lem:vmax-parents} shows that the nodes $j\in\ext$ which provide the higher value of $\vec v_j$ must belong to the set of \new{parents of $\intr$}. The converse is not true, as we will see in the following example: some parents of $\intr$ can provide a lower mean number of visits \new{to} $\intr$ that other nodes which are not parents of $\intr$. In other word, Lemma~\ref{lem:vmax-parents} gives a necessary but not sufficient condition in order to maximize the entry $\vec v_j$ for some $j\in\ext$.
\begin{example}\label{ex:parents-not-sufficient}
  Let us see on an example that having $(j,i)\in\ens E_\inn$ for some $i\in\intr$ is
  not sufficient to have $j\in\maxV $.
  Consider the graph in Figure~\ref{fig:parents-not-sufficient}. Let $\intr=\set{1}$
  and take a damping factor $c=0.85$.
  For $\vec v = (I-cP)^{-1}\vec e_1$, we have
  \[
    \vec v_2=\vec v_3=\vec v_4 =4.359> \vec v_5= 3.521
    >\vec v_6=3.492>\vec v_{7}>\cdots>\vec v_{11},
  \]
  so $\maxV =\set{2,3,4}$.
  As ensured by Lemma~\ref{lem:vmax-parents}, every node \new{of} the set
  $\maxV $ is a parent of node $1$.
  But here, $\maxV $ does not contain all parents of \new{node}~$1$. Indeed, the node~$6\notin\maxV $
  while it is a parent of~$1$ and is moreover its parent with the \new{lowest} outdegree.
  Moreover, we see in this example that \new{node} $5$, which is a not a parent of node $1$
  but a parent of \new{node} $6$, gives a higher value of
  the expected number of visits \new{to} $\intr$ before zapping, than node $6$, parent of $1$.
  Let us try to get some intuition about that.
  When starting from node $6$, a random surfer has probability one half
  to reach node $1$ in only one step.
  But he has also a probability one half to move to node $11$ and to be send
  far away from node $1$.
  On the other side, when starting from node $5$, the random surfer can not reach node $1$ in
  only one step. But with probability $3/4$ he will reach one of the nodes $2$, $3$ or $4$ in one step.
  And from these nodes, the websurfer \new{stays} very near to node $1$ and can not be \new{sent}
  far away from it.
  \begin{figure}[tb]
    \begin{center} \includegraphics{figures.1}
    \caption{The node $6\notin\maxV$
    \new{and yet} it is a parent of $\intr=\set{1}$
    (see Example~\ref{ex:parents-not-sufficient}).}
    \label{fig:parents-not-sufficient}
    \end{center}
  \end{figure}
\end{example}

In the next lemma, we show that from some node $i\in\intr$ which \new{has} an access to $\ext$, there always exists what we call a \mdef{decreasing path} to $\ext$. That is, we can find a path such that the mean number of visits \new{to} $\intr$ is higher when starting from some node of the path than when starting from the successor of this node in the path.
\begin{lemma}[Decreasing paths to $\ext$]\label{lem:decreasing-path}
  For every $i_0\in\intr$ which has an access to $\overline{\ens{I}}$,
  there exists a path $\langle i_0,i_1,\dots,i_s\rangle$ with $i_1,\dots,i_{s-1}\in\intr$ and
  $i_s\in\ext$
  such that \[  \vec v_{i_0}>\vec v_{i_1}>...>\vec v_{i_s}.  \]
\end{lemma}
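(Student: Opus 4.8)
The plan is to construct the path greedily, one node at a time, driving the construction by the strict form of Lemma~\ref{lem:minmaxv}$(c)$. First I would note that since $i_0\in\intr$ has an access to $\ext$, part~$(c)$ applies at $i_0$ with \emph{strict} inequality, so $\vec v_{i_0}>\min_{j\ch{i_0}}\vec v_j$. Choosing $i_1\in\argmin_{j\ch{i_0}}\vec v_j$ then produces a child with $\vec v_{i_0}>\vec v_{i_1}$. If $i_1\in\ext$ the path is already complete with $s=1$; otherwise $i_1\in\intr$ and I would like to repeat the argument at $i_1$, and so on, each time selecting a minimizing child.

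The crux is maintaining the invariant that every successor which lands in $\intr$ \emph{still has an access to $\ext$}, since this is exactly what licenses the strict inequality at the next step; the only danger is getting stuck at an interior node where the inequality in~$(c)$ degenerates to an equality. Here I would combine parts~$(b)$ and~$(c)$. Part~$(b)$ shows both that $\vec v_i\le\frac{1}{1-c}$ for every $i\in\ens N$ and that a node $i$ fails to have an access to $\ext$ precisely when $\vec v_i=\frac{1}{1-c}$, the global maximum of $\vec v$. Since the construction forces $\vec v_{i_{k+1}}<\vec v_{i_k}\le\frac{1}{1-c}$, each newly appended node satisfies $\vec v_{i_{k+1}}<\frac{1}{1-c}$ and therefore has an access to $\ext$. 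Consequently, as long as the path has not yet reached $\ext$, its current endpoint lies in $\intr$ and has an access to $\ext$, so Lemma~\ref{lem:minmaxv}$(c)$ again yields a strictly smaller child and the path extends. This strict-decrease/maximum argument is the heart of the proof and resolves the only real obstacle.

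Finally I would establish termination. Because the produced values satisfy $\vec v_{i_0}>\vec v_{i_1}>\cdots$, they are strictly decreasing, so the nodes of the path are pairwise distinct; as $\ens N$ is finite, the construction cannot proceed indefinitely and must halt. By the previous paragraph, the process can only halt by reaching a node $i_s\in\ext$ (from any interior node in $\intr$ it can always be continued). This yields a path $\langle i_0,i_1,\dots,i_s\rangle$ with $i_1,\dots,i_{s-1}\in\intr$, $i_s\in\ext$, and $\vec v_{i_0}>\vec v_{i_1}>\cdots>\vec v_{i_s}$, as required.
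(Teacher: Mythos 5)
Your proposal is correct and follows essentially the same route as the paper: the recursive choice $i_{k+1}\in\argmin_{j\ch{i_k}}\vec v_j$, the preservation of access to $\ext$ via $\vec v_{i_{k+1}}<\vec v_{i_k}\le\frac{1}{1-c}$ from Lemma~\ref{lem:minmaxv}$(b)$ and $(c)$, and termination by finiteness. Your explicit observation that the strictly decreasing values force the path nodes to be pairwise distinct is a slightly more careful phrasing of the paper's finiteness argument, but the proof is the same.
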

\begin{proof}
  Let us simply construct a decreasing path recursively by
  \[ i_{k+1}\in\argmin_{j\ch{i_k}}\vec v_j,\]
  as long as $i_k\in\intr$.
  If $i_k$ has an access to~$\ext$, then $\vec v_{i_{k+1}}<\vec v_{i_k}<\frac{1}{1-c}$
  by Lemma~\ref{lem:minmaxv}$(b)$ and $(c)$, so the node~$i_{k+1}$ has also an access to~$\ext$.
  By \new{assumption}, $i_0$ has an access to~$\ext$. Moreover, the set $\intr$ has a finite
  number of elements, so there must exist \new{an} $s$ such that $i_s\in\ext$.
\end{proof}

\subsection{Influence of the outlinks of a node}

We will now see how a modification of the outlinks of some node $i\in\ens N$ can change the PageRank of a subset of nodes $\intr\subseteq\ens N$. So we will compare two graphs on $\ens N$ defined by their set of links, $\ens E$ and $\widetilde{\ens E}$ respectively.

Every \new{item} corresponding to the graph defined by the set of links $\widetilde{\ens E}$ will be written with a tilde symbol. So $\widetilde{P}$ denotes its scaled adjacency matrix, $\prvt$ the corresponding PageRank vector, $\dgt{i}=\abs{\set{j\colon (i,j)\in\widetilde{\ens E}}}$ the outdegree of some node~$i$ in this graph, $\widetilde{\vec v}=(I-c\widetilde P)^{-1}\vec e_\intr$ and $\maxVt=\argmax_{j\in\ext}\widetilde{\vec v}_j$. Finally, by $j\cht{i}$ we mean $j\in\set{k\colon(i,k)\in\widetilde{\ens E}}$.

So, let us consider two graphs defined respectively by their set of links $\ens E$ and $\widetilde{\ens E}$. Suppose that they differ only in the links starting from some given node $i$, that is $\set{j\colon (k,j)\in\ens E}=\set{j\colon (k,j)\in\widetilde{\ens E}}$ for all $k\neq i$. Then their scaled adjacency matrices $P$ and $\widetilde P$ are linked by a rank one correction. 
Let us then define the vector
  \begin{equation*}
    \vec\delta=\sum_{j\cht{i}}\frac{\vec e_j}{\dgt{i}}
      -\sum_{j\ch{i}}\frac{\vec e_j}{\dg{i}},
  \end{equation*}
which gives the correction to apply to the line $i$ of the matrix $P$ in order to get $\widetilde P$.

Now let us first express the difference between the PageRank of $\intr$ for two configurations differing only in the links starting from some node~$i$. Note \new{that} in the following lemma the personalization vector $\vec z$ does not appear explicitly in the expression of $\prvt$.
\begin{lemma}\label{lem:pt-pi}
  Let two graphs defined respectively by $\ens E$ and $\widetilde{\ens E}$
  and let $i\in\ens N$ such that for all $k\neq i$,
  $\set{j\colon (k,j)\in\ens E}=\set{j\colon (k,j)\in\widetilde{\ens E}}$.
  Then
  \[
    \prvt^T \vec e_\intr 
     =\prv^T\vec e_\intr + c\,\prv_i\,
        \frac{\vec\delta^T \vec v }{1-c\,\vec\delta^T(I-cP)^{-1}\vec e_i}.
  \]
\end{lemma}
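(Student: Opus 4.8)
The plan is to recognize that passing from $\ens E$ to $\widetilde{\ens E}$ modifies only the outlinks of node $i$, so that $\widetilde P$ is a rank-one perturbation of $P$, and then to apply the Sherman--Morrison formula to the resolvent $(I-c\widetilde P)^{-1}$ and translate the result back into PageRank scores through~\eqref{eq:PR-zv}.

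First I would verify that $\widetilde P = P + \vec e_i\,\vec\delta^T$. Since the two graphs agree on the outlinks of every node $k\neq i$, the matrices $P$ and $\widetilde P$ coincide in all rows but the $i$-th. Row~$i$ of $P$, read as a column vector, is $\sum_{j\ch{i}}\vec e_j/\dg{i}$ and row~$i$ of $\widetilde P$ is $\sum_{j\cht{i}}\vec e_j/\dgt{i}$, so their difference is exactly the vector $\vec\delta$. Adding $\vec e_i\,\vec\delta^T$ to $P$ alters only its $i$-th row, and does so precisely by $\vec\delta^T$, hence gives $\widetilde P$. Consequently
\[
  I-c\widetilde P = (I-cP) - c\,\vec e_i\,\vec\delta^T .
\]

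Next I would invert this rank-one update. Under the standing assumption that every node keeps at least one outlink, $\widetilde P$ is stochastic and, because $c<1$, the matrix $I-c\widetilde P$ is invertible; together with the invertibility of $I-cP$, the determinant identity $\det(A-uv^T)=(1-v^TA^{-1}u)\det A$ guarantees that the scalar $1-c\,\vec\delta^T(I-cP)^{-1}\vec e_i$ does not vanish, so Sherman--Morrison applies and yields
\[
  (I-c\widetilde P)^{-1}
    = (I-cP)^{-1}
    + \frac{c\,(I-cP)^{-1}\vec e_i\,\vec\delta^T(I-cP)^{-1}}
           {1-c\,\vec\delta^T(I-cP)^{-1}\vec e_i}.
\]
Applying this to $\vec e_\intr$ and recalling $\vec v=(I-cP)^{-1}\vec e_\intr$ (so that $\vec\delta^T(I-cP)^{-1}\vec e_\intr=\vec\delta^T\vec v$) gives
\[
  \widetilde{\vec v}
    = \vec v
    + \frac{c\,(\vec\delta^T\vec v)\,(I-cP)^{-1}\vec e_i}
           {1-c\,\vec\delta^T(I-cP)^{-1}\vec e_i}.
\]

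Finally I would return to the PageRank of $\intr$ via $\prvt^T\vec e_\intr=(1-c)\vec z^T\widetilde{\vec v}$. The one point requiring care is the factor $(1-c)\vec z^T(I-cP)^{-1}\vec e_i$ produced after left-multiplying the correction term by $(1-c)\vec z^T$: by the resolvent form of the PageRank equations~\eqref{eq:PR-definition}, namely $\prv^T=(1-c)\vec z^T(I-cP)^{-1}$, this factor is exactly $\prv^T\vec e_i=\prv_i$, which also explains why $\vec z$ disappears from the correction. Using $\prv^T\vec e_\intr=(1-c)\vec z^T\vec v$ for the leading term then produces the claimed formula. The main obstacle here is not the algebra, which is a direct Sherman--Morrison computation, but the two bookkeeping checks: confirming that the rank-one factor is precisely $\vec e_i\,\vec\delta^T$ with the $\vec\delta$ defined above, and justifying that the denominator never vanishes, which follows from the invertibility of $I-c\widetilde P$.
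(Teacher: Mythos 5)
Your proof is correct and follows essentially the same route as the paper's: identify $\widetilde P=P+\vec e_i\,\vec\delta^T$, apply Sherman--Morrison to $(I-cP)-c\,\vec e_i\,\vec\delta^T$, and translate back via $\prv^T=(1-c)\vec z^T(I-cP)^{-1}$ so that the factor $(1-c)\vec z^T(I-cP)^{-1}\vec e_i$ becomes $\prv_i$. Your explicit justification that the denominator $1-c\,\vec\delta^T(I-cP)^{-1}\vec e_i$ cannot vanish (via the determinant identity and the invertibility of $I-c\widetilde P$) is a small but welcome addition that the paper leaves implicit.
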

\begin{proof}
  Clearly, the scaled adjacency matrices are linked by $\widetilde P = P + \vec e_i\, \vec\delta^T$.
  Since $c<1$, the matrix $(I-cP)^{-1}$ exists and the PageRank vectors can be expressed as
  \begin{align*}
   \prv^T&=(1-c)\vec z^T(I-cP)^{-1},\\
   \prvt^T&=(1-c)\vec z^T(I-c\,(P+\vec e_i\vec\delta^T))^{-1}.
  \end{align*}
  Applying the Sherman--Morrison formula \new{to} $((I-cP)-c\vec e_i\vec\delta^T)^{-1}$, we get
  \[
    \prvt^T =(1-c)\vec z^T(I-cP)^{-1}
      + (1-c)\vec z^T(I-cP)^{-1}\vec e_i\,
        \frac{c\vec\delta^T (I-cP)^{-1}}{1-c\vec\delta^T(I-cP)^{-1}\vec e_i},
  \]
  and the result follows immediately.
\end{proof}

Let us now give an equivalent condition in order to increase the PageRank of $\intr$ by changing outlinks of some node $i$. The PageRank of~$\intr$ increases \new{essentially} when the new set of links favors nodes giving a higher mean number of visits \new{to} $\intr$ before zapping.
\begin{theorem}[PageRank and mean number of visits before zapping]\label{thm:p>p-d>0}
  Let two graphs defined respectively by $\ens E$ and $\widetilde{\ens E}$
  and let $i\in\ens N$ such that for all $k\neq i$,
  $\set{j\colon (k,j)\in\ens E}=\set{j\colon (k,j)\in\widetilde{\ens E}}$.
  Then
  \[ \prvt^T \vec e_\intr >\prv^T\vec e_\intr \quad \text{ if and only if } \quad
     \vec\delta^T\vec v>0 \]
  and $\prvt^T \vec e_\intr =\prv^T\vec e_\intr$ if and only if
  $\vec\delta^T\vec v=0$.
\end{theorem}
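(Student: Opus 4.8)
The plan is to read off the result directly from Lemma~\ref{lem:pt-pi}, which already gives a closed-form expression for the difference $\prvt^T\vec e_\intr-\prv^T\vec e_\intr$. First I would recall from that lemma that
\[
  \prvt^T\vec e_\intr-\prv^T\vec e_\intr
    = c\,\prv_i\,\frac{\vec\delta^T\vec v}{1-c\,\vec\delta^T(I-cP)^{-1}\vec e_i}.
\]
Since $0<c<1$ and $\prv_i>0$ (the PageRank vector is the Perron vector of a positive matrix, hence strictly positive), the prefactor $c\,\prv_i$ is strictly positive. Consequently the sign of the whole expression is governed by the sign of the fraction $\vec\delta^T\vec v\,/\,(1-c\,\vec\delta^T(I-cP)^{-1}\vec e_i)$. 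The strategy is therefore to show that the denominator $1-c\,\vec\delta^T(I-cP)^{-1}\vec e_i$ is strictly positive; once this is established, the fraction has the same sign as its numerator $\vec\delta^T\vec v$, and both equivalences in the statement follow at once.

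The main obstacle is thus controlling the denominator. I would argue that $c\,\vec\delta^T(I-cP)^{-1}\vec e_i<1$. The cleanest route is to observe that $(I-cP)^{-1}=\sum_{k\ge0}c^kP^k$ is a nonnegative matrix whose $i$th column $(I-cP)^{-1}\vec e_i$ records expected numbers of visits before zapping, and in particular its $i$th entry is at least $1$ (the walk visits its own starting node at time zero). Writing $\vec\delta=\sum_{j\cht i}\vec e_j/\dgt i-\sum_{j\ch i}\vec e_j/\dg i$, the quantity $c\,\vec\delta^T(I-cP)^{-1}\vec e_i$ is a difference of two convex combinations of the entries $c\,[(I-cP)^{-1}]_{ji}$. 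Each such entry is bounded: since $(I-cP)^{-1}\vec e_i=cP(I-cP)^{-1}\vec e_i+\vec e_i$, the off-diagonal entries satisfy $[(I-cP)^{-1}]_{ji}=c\,[P(I-cP)^{-1}\vec e_i]_j\le c\,\max_k[(I-cP)^{-1}]_{ki}$, and arguing as in Lemma~\ref{lem:minmaxv}$(b)$ the maximum is attained at $j=i$ with value bounded by $\tfrac{1}{1-c}$, giving $c\,[(I-cP)^{-1}]_{ji}<1$ for every $j$. Since the positive part $c\sum_{j\cht i}[(I-cP)^{-1}]_{ji}/\dgt i$ is an average of such quantities it is itself strictly less than $1$, and subtracting the nonnegative other part only decreases it; hence the denominator $1-c\,\vec\delta^T(I-cP)^{-1}\vec e_i>0$.

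With the denominator shown to be strictly positive, the conclusion is immediate: because $c\,\prv_i>0$ and the denominator is positive, we have $\prvt^T\vec e_\intr-\prv^T\vec e_\intr>0$ exactly when $\vec\delta^T\vec v>0$, and equality $\prvt^T\vec e_\intr=\prv^T\vec e_\intr$ holds exactly when $\vec\delta^T\vec v=0$, which is precisely the asserted equivalence. I expect the sign analysis of the denominator to be the only genuinely delicate point; the rest is a direct transcription of Lemma~\ref{lem:pt-pi} together with the positivity of the Perron vector $\prv$.
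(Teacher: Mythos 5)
Your overall strategy---reading the sign off from Lemma~\ref{lem:pt-pi} after checking that the denominator $1-c\,\vec\delta^T(I-cP)^{-1}\vec e_i$ is strictly positive---is exactly the paper's, but your argument for the denominator has a genuine gap. You claim that $c\,[(I-cP)^{-1}]_{ji}<1$ for every $j$, deduced from the bound $[(I-cP)^{-1}]_{ii}\le\frac{1}{1-c}$. That bound only gives $c\,[(I-cP)^{-1}]_{ji}\le\frac{c}{1-c}$ (or $\frac{c^2}{1-c}$ off the diagonal), which exceeds $1$ unless $c$ is small; for the standard $c=0.85$ one gets $\frac{c^2}{1-c}\approx4.8$. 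Concretely, if $i$ and $j$ form a $2$-cycle then $\vec u_j=\frac{c}{1-c^2}$ and $c\,\vec u_j=\frac{c^2}{1-c^2}\approx2.6>1$, so the ``positive part'' $c\sum_{j\cht{i}}\vec u_j/\dgt{i}$ of $c\,\vec\delta^T(I-cP)^{-1}\vec e_i$ can by itself be larger than $1$. Bounding that part by $1$ on its own and then remarking that subtracting the other sum ``only decreases it'' therefore cannot work: the subtracted term is essential to the bound, not a bonus.

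The paper closes this by bounding the whole difference at once. With $\vec u=(I-cP)^{-1}\vec e_i$, Lemma~\ref{lem:minmaxv}$(a)$ applied to the singleton $\set{i}$ gives $\vec u_j\le\vec u_i$ for all $j$, hence
\[
\vec\delta^T\vec u
  \le \vec u_i-\sum_{j\ch{i}}\frac{\vec u_j}{\dg{i}}
  \le \vec u_i-c\sum_{j\ch{i}}\frac{\vec u_j}{\dg{i}}
  =(\vec u-cP\vec u)_i=1,
\]
so $c\,\vec\delta^T\vec u\le c<1$ and the denominator is at least $1-c>0$. Note that here the new-children average is bounded by $\vec u_i$ (which may be as large as $\frac{1}{1-c}$), and it is the cancellation against the old-children average, through the identity $\vec u_i-c(P\vec u)_i=1$, that produces the bound $1$. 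Once this is in place, your concluding step (positivity of $c\,\prv_i$ and of the denominator yields both stated equivalences) is correct and identical to the paper's.
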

\begin{proof}
  Let us first show that $\vec\delta^T(I-cP)^{-1}\vec e_i\le1$ is always verified.
  Let $\vec u =(I-cP)^{-1}\vec e_i$.  Then $\vec u-cP\vec u=\vec e_i$ and,
  by Lemma~\ref{lem:minmaxv}$(a)$, $\vec u_j\le\vec u_i$ for all $j$.
  So
  \[
    \vec\delta^T\vec u
      = \sum_{j\cht{i}}\frac{\vec u_j}{\dgt{i}}
      -\sum_{j\ch{i}}\frac{\vec u_j}{\dg{i}}
      \le \vec u_i - \sum_{j\ch{i}}\frac{\vec u_j}{\dg{i}}
      \le \vec u_i - c\sum_{j\ch{i}}\frac{\vec u_j}{\dg{i}} =1.
   \]
  Now, since $c<1$ and $\prv>0$, the conclusion follows by Lemma~\ref{lem:pt-pi}.
\end{proof}

The following Proposition~\ref{prop:add-link} shows how to \new{add a new} link $(i,j)$ starting from a given node $i$ in order to increase the PageRank of the set $\intr$. The PageRank of $\intr$ increases as soon as a node $i\in\intr$ adds a link to a node \new{$j$ with a larger or equal expected number of visits to $\intr$ before zapping}.
\begin{proposition}[Adding a link]\label{prop:add-link}
  Let $i\in\intr$ and let
  $j\in\ens N$ \new{be} such that $(i,j)\notin\ens E$ and $\vec v_i\le \vec v_j$.
  Let $\widetilde{\ens E}=\ens E\cup\set{(i,j)}$.
  Then \[ \prvt^T\vec e_{\intr} \ge \prv^T\vec e_{\intr} \]
  with equality if and only \new{if} the node $i$ does not have an access to $\ext$.
\end{proposition}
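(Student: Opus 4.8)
The plan is to reduce the whole statement to Theorem~\ref{thm:p>p-d>0} by computing the scalar $\vec\delta^T\vec v$ for the specific modification that adds the single outlink $(i,j)$, and showing that it is nonnegative, with vanishing precisely when $i$ has no access to $\ext$. First I would write down $\vec\delta$ explicitly. Since adding $(i,j)$ turns the children of $i$ into the old ones together with $j$ and raises the outdegree to $\dgt{i}=\dg{i}+1$, a short computation gives
\[
  \vec\delta=\frac{\vec e_j}{\dg{i}+1}-\frac{1}{\dg{i}(\dg{i}+1)}\sum_{k\ch{i}}\vec e_k,
\]
so that
\[
  \vec\delta^T\vec v=\frac{1}{\dg{i}+1}\Bigl(\vec v_j-\frac{1}{\dg{i}}\sum_{k\ch{i}}\vec v_k\Bigr).
\]

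Next I would eliminate the average over the children using the defining relation $\vec v=cP\vec v+\vec e_\intr$, which for $i\in\intr$ reads $\vec v_i=1+\frac{c}{\dg{i}}\sum_{k\ch{i}}\vec v_k$, hence $\frac{1}{\dg{i}}\sum_{k\ch{i}}\vec v_k=\frac{\vec v_i-1}{c}$. This turns the expression into the clean form
\[
  \vec\delta^T\vec v=\frac{1}{\dg{i}+1}\Bigl(\vec v_j-\frac{\vec v_i-1}{c}\Bigr).
\]
To obtain $\vec\delta^T\vec v\ge0$ I would chain two facts: the hypothesis $\vec v_i\le\vec v_j$, and the universal bound $\vec v_i\le\frac{1}{1-c}$ from Lemma~\ref{lem:minmaxv}$(b)$, the latter being exactly equivalent to $\vec v_i\ge\frac{\vec v_i-1}{c}$. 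Together they give $\vec v_j\ge\vec v_i\ge\frac{\vec v_i-1}{c}$, so $\vec\delta^T\vec v\ge0$, and Theorem~\ref{thm:p>p-d>0} then yields $\prvt^T\vec e_\intr\ge\prv^T\vec e_\intr$.

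For the equality characterization, the forward direction is immediate from the clean form: $\vec\delta^T\vec v=0$ forces $\vec v_j=\frac{\vec v_i-1}{c}$, which together with $\vec v_j\ge\vec v_i$ gives $(1-c)\vec v_i\ge1$, so in fact $\vec v_i=\frac{1}{1-c}$ by Lemma~\ref{lem:minmaxv}$(b)$, i.e.\ $i$ has no access to $\ext$. The step I expect to be the most delicate is the converse: if $i$ has no access to $\ext$ then $\vec v_i=\frac{1}{1-c}$, but I must still check that $\vec v_j=\frac{1}{1-c}$ as well, even though $j$ is \emph{not} a child of $i$ in the graph on which $\vec v$ is computed, so the argument in the proof of Lemma~\ref{lem:minmaxv}$(b)$ propagating the value $\frac{1}{1-c}$ along children does not directly reach $j$. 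The resolution is a squeeze: the hypothesis gives $\vec v_j\ge\vec v_i=\frac{1}{1-c}$ from below, while Lemma~\ref{lem:minmaxv}$(b)$ gives $\vec v_j\le\frac{1}{1-c}$ from above, forcing $\vec v_j=\frac{1}{1-c}$. Substituting back makes $\vec\delta^T\vec v=0$, and Theorem~\ref{thm:p>p-d>0} delivers the equality of the PageRanks, completing the proof.
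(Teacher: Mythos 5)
Your proof is correct and follows essentially the same route as the paper: compute $\vec\delta^T\vec v=\frac{1}{\dg{i}+1}\bigl(\vec v_j-\sum_{k\ch{i}}\vec v_k/\dg{i}\bigr)$, bound it below by $0$ using the hypothesis $\vec v_i\le\vec v_j$ together with Lemma~\ref{lem:minmaxv}$(b)$, and conclude via Theorem~\ref{thm:p>p-d>0}. Your explicit squeeze argument for the equality case (forcing $\vec v_j=\vec v_i=\frac{1}{1-c}$ when $i$ has no access to $\ext$) is exactly the step the paper leaves implicit in its chain of inequalities, so this is a correct and slightly more detailed rendering of the same proof.
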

\begin{proof}
  Let $i\in\intr$ and let $j\in\ens N$ \new{be}
  such that $(i,j)\notin\ens E$ and $\vec v_i\le \vec v_j$.
  Then
  \[
    1+c\sum_{k\ch{i}}\frac{\vec v_k}{\dg{i}}=\vec v_i\le1+c\vec v_i\le 1+c\vec v_j,
   \]
  with equality if and only if $i$ does not have an access to $\ext$
  by Lemma~\ref{lem:minmaxv}$(b)$.
  Let $\widetilde{\ens E}=\ens E\cup\set{(i,j)}$. Then
  \[
    \vec\delta^T\vec v
      = \frac{1}{\dg{i}+1}
        \bigg( \vec v_j-\sum_{k\ch{i}}\frac{\vec v_k}{\dg{i}}\bigg)
      \ge0,
  \]
  with equality if and only \new{if} $i$ does not have an access to $\ext$.
  \new{The conclusion follows from} Theorem~\ref{thm:p>p-d>0}.
\end{proof}

Now let us see how to \new{remove} a link $(i,j)$ starting from a given node $i$ in order to increase the PageRank of the set $\intr$. If a node $i\in\ens N$ removes a link to its worst child from the point of view of the expected number of visits to $\intr$ before zapping, then the PageRank of $\intr$ increases.
\begin{proposition}[Removing a link]\label{prop:remove-link}
  Let $i\in\ens N$ and let
  $j\in\argmin_{k\ch{i}} \vec v_k$.
  Let $\widetilde{\ens E}=\ens E\setminus\set{(i,j)}$.
  Then \[ \prvt^T\vec e_{\intr} \ge \prv^T\vec e_{\intr} \]
  with equality if and only \new{if} $\vec v_k=\vec v_j$ for every $k$ such that $(i,k)\in\ens E$.
\end{proposition}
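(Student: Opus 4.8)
The plan is to mirror the proof of Proposition~\ref{prop:add-link}: I would compute the scalar $\vec\delta^T\vec v$ explicitly and then read off the conclusion from Theorem~\ref{thm:p>p-d>0}, which says that the PageRank of $\intr$ increases, stays equal, or decreases exactly according to the sign of $\vec\delta^T\vec v$. Write $d=\dg{i}$ for the outdegree of $i$ before deletion. Since exactly one outlink is removed, $\dgt{i}=d-1$ and the new set of children of $i$ is the old one with $j$ deleted. Implicitly $d\ge2$ here, so that node $i$ still has an outlink after the deletion, in accordance with the standing assumption $\dg{i}\neq0$; this is also what keeps the formula below well defined, and is the one point that genuinely needs care.

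The main computation is then a one-line rational-function manipulation. From the definition of $\vec\delta$,
\[
  \vec\delta^T\vec v
    = \sum_{k\cht{i}}\frac{\vec v_k}{\dgt{i}}
      - \sum_{k\ch{i}}\frac{\vec v_k}{\dg{i}}
    = \frac{1}{d-1}\sum_{\substack{k\ch{i}\\ k\neq j}}\vec v_k
      - \frac{1}{d}\sum_{k\ch{i}}\vec v_k.
\]
Setting $S=\sum_{k\ch{i}}\vec v_k$, the first sum equals $S-\vec v_j$, and putting everything over a common denominator collapses the expression to
\[
  \vec\delta^T\vec v
    = \frac{S-\vec v_j}{d-1}-\frac{S}{d}
    = \frac{S-d\,\vec v_j}{d(d-1)}.
\]

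Because $j\in\argmin_{k\ch{i}}\vec v_k$, each of the $d$ terms of $S$ satisfies $\vec v_k\ge\vec v_j$, hence $S\ge d\,\vec v_j$ and therefore $\vec\delta^T\vec v\ge0$. Equality holds exactly when $S=d\,\vec v_j$, i.e.\ when $\vec v_k=\vec v_j$ for every $k\ch{i}$ (a sum of $d$ terms each $\ge\vec v_j$ equals $d\,\vec v_j$ only if all terms equal $\vec v_j$). Applying Theorem~\ref{thm:p>p-d>0} then yields $\prvt^T\vec e_{\intr}\ge\prv^T\vec e_{\intr}$, with equality if and only if $\vec\delta^T\vec v=0$, which is precisely the stated condition that $\vec v_k=\vec v_j$ for every $k$ with $(i,k)\in\ens E$. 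No real obstacle arises beyond keeping track of the $d\ge2$ hypothesis and the sign bookkeeping; the argument is essentially the deletion-analogue of the addition case.
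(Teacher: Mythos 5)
Your proof is correct and follows essentially the same route as the paper: compute $\vec\delta^T\vec v=\sum_{k\ch{i}}\frac{\vec v_k-\vec v_j}{\dg{i}(\dg{i}-1)}\ge0$ and invoke Theorem~\ref{thm:p>p-d>0}, with the equality case read off from the minimality of $\vec v_j$. Your explicit remark that $\dg{i}\ge2$ is needed for the deletion to be admissible is a point the paper leaves implicit, but otherwise the arguments coincide.
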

\begin{proof}
  Let $i\in\ens N$ and let
  $j\in\argmin_{k\ch{i}} \vec v_k$. Let $\widetilde{\ens E}=\ens E\setminus\set{(i,j)}$.
  Then
  \[
    \vec\delta^T\vec v
      = \sum_{k\ch{i}}\frac{\vec v_k-\vec v_j}{\dg{i}(\dg{i}-1)}
      \ge0,
  \]
  with equality if and only if $\vec v_k=\vec v_j$ for all $k\ch{i}$.
  The conclusion follows by Theorem~\ref{thm:p>p-d>0}.
\end{proof}

\new{In order to increase the PageRank of $\ens I$ with a new link $(i,j)$, Proposition~\ref{prop:add-link} only requires that $\vec v_j\le\vec v_i$.
On the other side, Proposition~\ref{prop:remove-link} requires that $\vec v_j=\min_{k\ch{i}}\vec v_k$
in order to increase the PageRank of $\ens I$ by deleting link $(i,j)$.
One could wonder whether or not this condition could be weakened to $\vec v_j<\vec v_i$,
so as to have symmetric conditions for the addition or deletion of links.
In fact, this can not be done as shown in the following example.}
\begin{example}\label{ex:v<v-not-sufficient}
  Let us see by an example that the condition $j\in\argmin_{k\ch{i}} \vec v_k$
  in Proposition~\ref{prop:remove-link} can not be weakened \new{to} $\vec v_j<\vec v_i$.
  Consider the graph in Figure~\ref{fig:v<v-not-sufficient} and take a damping factor $c=0.85$.
  Let $\intr=\set{1,2,3}$. We have
  \[
  \vec v_1=2.63>\vec v_2=2.303>\vec v_3 =1.533.
  \]
  As ensured by Proposition~\ref{prop:remove-link},
  if we remove the link $(1,3)$, the PageRank of $\intr$ increases
  (e.g.\ from $0.199$ to $0.22$ with a uniform personalization vector $\vec z=\frac{1}{n}\vun$),
  since $3\in\argmin_{k\ch{1}}\vec v_k$.
  But, if we remove instead the link $(1,2)$, the PageRank of $\intr$ decreases
  (from $0.199$ to $0.179$ with $\vec z$ uniform) even if $\vec v_2<\vec v_1$.
  \begin{figure}[htb]
    \begin{center} \includegraphics{figures.2}
    \caption{For $\intr=\set{1,2,3}$,
    removing link~$(1,2)$ gives $\prvt^T\vec e_\intr<\prv^T\vec e_\intr$,
    even if $\vec v_1>\vec v_2$ (see Example~\ref{ex:v<v-not-sufficient}).}
    \label{fig:v<v-not-sufficient}
    \end{center}
  \end{figure}
\end{example}

\begin{remark}
  Let us note that, if the node $i$ does not have an access to the set $\ext$,
  then for every \new{ \emph{deletion} of a} link starting from $i$,
  the PageRank of $\intr$ will not be modified.
  Indeed, in this case $\vec\delta^T\vec v=0$ since by Lemma~\ref{lem:minmaxv}$(b)$,
  $\vec v_j=\frac{1}{1-c}$ for every $j\ch{i}$.
\end{remark}

\subsection{Basic absorbing graph}

Now, let us introduce briefly the notion of basic absorbing graph (see Chapter~III about absorbing Markov chains in Kemeny and Snell's book~\cite{KS60}).

For a given graph $(\ens N,\ens E)$ and a specified subset of nodes $\intr\subseteq\ens N$, the \mdef{basic absorbing graph} is the graph $(\ens N,\ens E^0)$ defined by $\ens E_\out^0=\emptyset$, $\ens E_\intr^0=\set{(i,i)\colon i\in\intr}$, $\ens E_\inn^0=\ens E_\inn$ and $\ens E_\ext^0=\ens E_\ext$. In other words, the basic absorbing graph $(\ens N,\ens E^0)$ is a graph constructed from $(\ens N,\ens E)$, keeping the same sets of external inlinks and external links \new{$\ens E_\inn,\ens E_\ext$}, removing the external outlinks \new{$\ens E_\out$} and changing the internal link structure \new{$\ens E_\intr$} in order to have only self-links for nodes of~$\intr$.

Like in the previous subsection, every \new{item} corresponding to the basic absorbing graph will \new{have} a zero symbol. For instance, we will write $\prv_0$ for the PageRank vector corresponding to the basic absorbing graph and $\maxV_0=\argmax_{j\in\ext}[(I-cP_0)^{-1}\vec e_\intr]_j$.

\begin{proposition}[PageRank for a basic absorbing graph]\label{prop:out0-best}
  Let a graph defined by a set of links $\ens E$ and let $\intr\subseteq\ens N$.
  Then
  \[ \prv^T\vec e_\intr\le\prv_0^T\vec e_\intr, \]
  with equality if and only if ${\ens E}_\out=\emptyset$.
\end{proposition}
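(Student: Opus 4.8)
The plan is to transform the given graph into its basic absorbing graph through a finite chain of elementary moves, each one changing only the outlinks of a single node of $\intr$ so as to leave it with its self-link as sole outlink, and to control the PageRank of $\intr$ at every move by means of Theorem~\ref{thm:p>p-d>0}. Order the nodes of $\intr$ as $i_1,\dots,i_m$ and define graphs $\ens E=\ens E^{(0)},\ens E^{(1)},\dots,\ens E^{(m)}=\ens E^0$, where $\ens E^{(k)}$ is obtained from $\ens E^{(k-1)}$ by replacing all outlinks of $i_k$ by the single self-link $(i_k,i_k)$. Since $\ens E^{(k)}$ and $\ens E^{(k-1)}$ differ only in the outlinks of $i_k$, and since no external inlink or external link is ever touched, the final graph is indeed the basic absorbing graph, and each consecutive pair fits the hypotheses of Theorem~\ref{thm:p>p-d>0} with correction vector $\vec\delta=\vec e_{i_k}-\sum_{j\ch{i_k}}\frac{\vec e_j}{\dg{i_k}}$, the children being taken in $\ens E^{(k-1)}$.

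The key step is a one-line computation of $\vec\delta^T\vec v$ for the move at $i_k$, where $\vec v$ is the vector of expected visits in $\ens E^{(k-1)}$. Since $i_k\in\intr$, the defining relation $\vec v=cP\vec v+\vec e_\intr$ gives $\vec v_{i_k}=1+c\sum_{j\ch{i_k}}\frac{\vec v_j}{\dg{i_k}}$, hence
\[
  \vec\delta^T\vec v
    =\vec v_{i_k}-\sum_{j\ch{i_k}}\frac{\vec v_j}{\dg{i_k}}
    =\vec v_{i_k}-\frac{\vec v_{i_k}-1}{c}
    =\frac{1-(1-c)\vec v_{i_k}}{c}.
\]
By Lemma~\ref{lem:minmaxv}$(b)$ one has $\vec v_{i_k}\le\frac{1}{1-c}$, so $\vec\delta^T\vec v\ge0$, with equality if and only if $\vec v_{i_k}=\frac{1}{1-c}$, that is, if and only if $i_k$ has no access to $\ext$ in $\ens E^{(k-1)}$. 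Theorem~\ref{thm:p>p-d>0} then shows the PageRank of $\intr$ does not decrease at any move, and chaining the $m$ inequalities yields $\prv^T\vec e_\intr\le\prv_0^T\vec e_\intr$.

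It remains to pin down equality, which holds if and only if every move is an equality. If $\ens E_\out=\emptyset$, then every outlink from a node of $\intr$ stays inside $\intr$, so no node of $\intr$ has access to $\ext$; this property survives each move, since a move only replaces the outlinks of an $\intr$-node by a self-link, again inside $\intr$. Hence $\vec\delta^T\vec v=0$ at every move and $\prv^T\vec e_\intr=\prv_0^T\vec e_\intr$. Conversely, if $\ens E_\out\neq\emptyset$, choose $i_k$ carrying an original external outlink $(i_k,j)$ with $j\notin\intr$; since the outlinks of $i_k$ are untouched until $i_k$ itself is processed, $i_k$ still has this direct link to $\ext$ in $\ens E^{(k-1)}$, so it has access to $\ext$ there and the corresponding move is strict, forcing $\prv^T\vec e_\intr<\prv_0^T\vec e_\intr$.

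The main obstacle is the equality analysis rather than the inequality: one must track the ``access to $\ext$'' property along the whole chain of graphs and ensure that, when $\ens E_\out\neq\emptyset$, at least one move is strict independently of the chosen order. The observation that a node keeps its own external outlink until it is processed resolves this cleanly. A minor point to keep in mind is that leaving each processed node with a self-link preserves the standing assumption that every node has at least one outlink, so all the Google matrices involved remain well defined.
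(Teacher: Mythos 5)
Your proof is correct, but it takes a genuinely different route from the paper's. The paper argues directly on the vector $\vec v=(I-cP)^{-1}\vec e_\intr$: after a permutation it solves the block system to get $\vec v_\ext=c(I-cP_\ext)^{-1}P_\inn\vec v_\intr$, bounds $\vec v_\intr\le\frac{1}{1-c}\vun$ by Lemma~\ref{lem:minmaxv}$(b)$, observes that the resulting upper bound is exactly $\vec v_0$, and concludes via $\prv^T\vec e_\intr=(1-c)\vec z^T\vec v$ with $\vec z>0$; the equality case falls out of the equality case of Lemma~\ref{lem:minmaxv}$(b)$. You instead interpolate between $\ens E$ and $\ens E^0$ by a chain of single-node outlink replacements and invoke Theorem~\ref{thm:p>p-d>0} at each step, with the clean computation $\vec\delta^T\vec v=\frac{1-(1-c)\vec v_{i_k}}{c}\ge0$, again closed by Lemma~\ref{lem:minmaxv}$(b)$. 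Your equality analysis is sound: the observation that a node's own external outlink survives untouched until that node is processed correctly forces a strict step whenever $\ens E_\out\neq\emptyset$, and the ``no access to $\ext$'' property is indeed preserved along the chain when $\ens E_\out=\emptyset$. The paper's computation is shorter and yields the explicit block formula~\eqref{eq:vintr-vext}, which it reuses later (e.g.\ in Lemma~\ref{lem:V=V0-single}); your argument is more in the spirit of the perturbation machinery of Section~\ref{sec:optimal-linkage-strategy} and avoids the block-matrix manipulation entirely, at the cost of having to track the accessibility property along the whole chain. Both hinge on the same inequality $\vec v_i\le\frac{1}{1-c}$ and its equality characterization.
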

\begin{proof}
  Up to a permutation of the indices, equation~\eqref{eq:v-definition} can be written as
  \[
  \begin{pmatrix}I-cP_{\intr}&-cP_{\out}\\-cP_{\inn}&I-cP_{\ext}\end{pmatrix}
  \begin{pmatrix}\vec v_\intr\\\vec v_\ext\end{pmatrix}
  =\begin{pmatrix}\vun\\0\end{pmatrix},
  \]
  so we get
  \begin{equation}\label{eq:vintr-vext}
  \vec v = \begin{pmatrix}\vec v_\intr\\c (I-cP_{\ext})^{-1}P_\inn\vec v_\intr\end{pmatrix}.
  \end{equation}
  By Lemma~\ref{lem:minmaxv}(b) and since $(I-cP_{\ext})^{-1}$ is a nonnegative matrix
  (see for instance the chapter on $M$-matrices in Berman and Plemmons's book~\cite{BP94}),
  we then have
  \[ \vec v
    \le \begin{pmatrix}\frac{1}{1-c}\,\vun\\
      \frac{c}{1-c}\, (I-cP_{\ext})^{-1}P_\inn\vun\end{pmatrix}=\vec v_0, \]
  with equality if and only if no node of~$\intr$ has an access to~$\ext$, that is 
  $\ens E_\out=\emptyset$. The conclusion now follows from equation~\eqref{eq:PR-zv} and
  $\vec z>0$.
\end{proof}

Let us finally prove a nice property of the set $\maxV $ when $\intr=\set{i}$ is a singleton: it is independent of the outlinks of $i$. In particular, it can be found from the basic absorbing graph.
\begin{lemma}\label{lem:V=V0-single}
  Let a graph defined by a set of links $\ens E$ and let $\intr=\set{i}$
  Then there exists \new{an} $\alpha\neq0$ such that
  $(I-cP)^{-1}\vec e_i=\alpha(I-cP_0)^{-1}\vec e_i$.
  As a consequence,
  \[
    \maxV =\maxV_0 .
  \]
\end{lemma}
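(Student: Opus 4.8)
The plan is to exploit the block decomposition of $\vec v$ already recorded in equation~\eqref{eq:vintr-vext}, specialized to the singleton case $\intr=\set{i}$. Writing $\vec v=(I-cP)^{-1}\vec e_i$ and splitting indices into $\intr=\set{i}$ and $\ext$, that equation gives
\[
  \vec v_\ext = c\,(I-cP_\ext)^{-1}P_\inn\,\vec v_\intr .
\]
Because $\intr=\set{i}$ is a singleton, $\vec v_\intr$ reduces to the single scalar $\vec v_i$, so the whole external part is merely a scalar multiple of one fixed vector,
\[
  \vec v_\ext = \vec v_i\cdot c\,(I-cP_\ext)^{-1}P_\inn .
\]

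The key observation I would make is that the blocks $P_\inn$ and $P_\ext$ depend only on $\ens E_\inn$ and $\ens E_\ext$ together with the outdegrees of the nodes lying outside $\intr$, and none of this data is altered in the basic absorbing graph, which touches only links starting from nodes of $\intr$. Hence $P_\inn^0=P_\inn$ and $P_\ext^0=P_\ext$, and the identical computation applied to the basic absorbing graph yields
\[
  \vec v_{0,\ext} = \vec v_{0,i}\cdot c\,(I-cP_\ext)^{-1}P_\inn ,
\]
with the very same vector $c\,(I-cP_\ext)^{-1}P_\inn$ appearing. Setting $\alpha=\vec v_i/\vec v_{0,i}$, I then read off $\vec v_\ext=\alpha\,\vec v_{0,\ext}$, while $\vec v_i=\alpha\,\vec v_{0,i}$ holds by the definition of $\alpha$; together these give $\vec v=\alpha\,\vec v_0$. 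To see $\alpha\neq0$, I would use that $(I-cP)^{-1}=\sum_{k\ge0}c^kP^k$ is nonnegative and contributes a $1$ on the diagonal from its $k=0$ term, so $\vec v_i\ge1$ and likewise $\vec v_{0,i}\ge1$; thus in fact $\alpha>0$.

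For the consequence $\maxV=\maxV_0$, it then suffices to note that scaling by the strictly positive constant $\alpha$ cannot change which coordinates attain the maximum over $j\in\ext$:
\[
  \maxV=\argmax_{j\in\ext}\vec v_j=\argmax_{j\in\ext}\alpha\,\vec v_{0,j}
    =\argmax_{j\in\ext}\vec v_{0,j}=\maxV_0 .
\]

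The main point to get right is the observation that the external blocks $P_\inn$ and $P_\ext$ are untouched when passing to the basic absorbing graph, so that $\vec v_\ext$ and $\vec v_{0,\ext}$ are proportional to one and the same fixed vector. Once that is in place, the singleton hypothesis collapses $\vec v_\intr$ to a scalar and the proportionality of the full vectors is immediate; verifying $\alpha>0$ (hence $\alpha\neq0$) via the Neumann-series nonnegativity of $(I-cP)^{-1}$ is the only remaining loose end.
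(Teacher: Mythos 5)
Your proof is correct and follows essentially the same route as the paper: both specialize the block formula~\eqref{eq:vintr-vext} to the singleton case, observe that $\vec v_\intr=\vec v_i$ is a scalar so that $\vec v$ is a scalar multiple of a vector built only from $P_\inn$ and $P_\ext$ (which the basic absorbing graph leaves untouched), and conclude proportionality. The only difference is that you spell out the details the paper leaves implicit, in particular the positivity of $\alpha$ via the Neumann series.
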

\begin{proof}
  Let $\intr=\set{i}$.
  Since $\vec v_\intr=\vec v_i$ is a scalar, it follows from equation~\eqref{eq:vintr-vext} that
  the direction of the vector $\vec v$ does not depend
  on $\ens E_\intr$ and $\ens E_\out$ but only on $\ens E_\inn$ and $\ens E_\ext$.
\end{proof}


\section{Optimal linkage strategy for a website}\label{sec:optimal-linkage-strategy}

In this section, we consider a set of nodes $\intr$. For this set, we want to choose the sets of internal links $\ens E_{\intr}\subseteq\intr\times \intr$ and external outlinks $\ens E_{\out}\subseteq\intr\times \ext$ in order to maximize the PageRank score of $\intr$, that is $\prv^T\vec e_\intr$.

Let us first discuss about the constraints on $\ens E$ we will consider. If we do not impose any condition on $\ens E$, the problem of maximizing $\prv^T\vec e_\intr$ is quite trivial.  As shown by Proposition~\ref{prop:out0-best}, you should take in this case $\ens E_\out=\emptyset$ and $\ens E_\intr$ an arbitrary subset of $\intr\times \intr$ such that each node has at least one outlink. You just try to \new{lure the random walker to} your pages, not allowing him to leave $\intr$ except by zapping according to the preference vector.
Therefore, it seems sensible to impose that \emph{$\ens E_\out$ must be nonempty}.

\new{Now, let us show that, in order to avoid trivial solutions to our maximization problem, it is not enough to assume that $\ens E_\out$ must be nonempty. Indeed, with this single constraint, in order to lose as few as possible visits from the random walker, you should take a unique leaking node $k\in\intr$ (i.e.\ $\ens E_\out=\set{(k,\ell)}$ for some $\ell\in\ext$) and isolate it from the rest of the set $\intr$ (i.e.\ $\set{i\in\intr\colon(i,k)\in\ens E_\intr}=\emptyset$).}

Moreover, it seems reasonable to imagine that Google penalizes (or at least \new{tries} to penalize) such behavior in the context of spam alliances~\cite{GGM05}.

\medskip
All this discussion leads us to make the following assumption.
\begin{assumption}[\new{Accessibility}]\label{ass:access-to-ext}
  Every node \new{of} $\intr$ has an access to at least one node \new{of} $\ext$.
\end{assumption}

\medskip
Let us now explain the basic ideas we will use in order to determine an optimal linkage strategy for a set of webpages $\intr$.
We determine some forbidden patterns for an optimal linkage strategy and deduce the only possible structure an optimal strategy can have. In other words, we assume that we have a configuration which gives an optimal PageRank $\prv^T\vec e_\intr$. Then we prove that if some particular pattern appeared in this optimal structure, then we could construct another graph for which the PageRank $\prvt^T\vec e_\intr$ is strictly higher than $\prv^T\vec e_\intr$.

\medskip
We will firstly determine the shape of an optimal external outlink structure $\ens E_\out$, when the internal link structure $\ens E_\intr$ is given, in Theorem~\ref{thm:out-opt}. Then, given the external outlink structure $\ens E_\out$ we will determine the possible optimal internal link structure $\ens E_\intr$ in Theorem~\ref{thm:intr-opt}. Finally, we will put both results together in Theorem~\ref{thm:website-opt} in order to get the general shape of an optimal linkage strategy for a set $\intr$ when $\ens E_\inn$ and $\ens E_\ext$ are given.

Proofs of this section will be illustrated by several figures \new{for which we take the following drawing convention}.
\begin{convention}
\new{When nodes are \new{drawn} from left to right on the same horizontal line, they are arranged by decreasing value of~$\vec v_j$. Links are represented by continuous arrows and paths by dashed arrows.}
\end{convention}

\medskip
The first result of this section concerns the optimal \emph{outlink} structure $\ens E_\out$ for the set $\intr$, while its internal structure $\ens E_\intr$ is given. An example of optimal outlink structure is given \new{after the theorem}.

\begin{theorem}[Optimal outlink structure]\label{thm:out-opt}
  Let $\ens E_\intr$, $\ens E_\inn$ and $\ens E_\ext$ be given.
  Let $\ens F_1, \dots,\ens F_r$ be the final classes \new{of the subgraph}
  $(\intr,\ens E_\intr)$.
  Let $\ens E_\out$ such that the PageRank $\prv^T\vec e_\intr$ is maximal 
  under Assumption~A. Then $\ens E_\out$ has the following structure:
  \[  \ens E_\out = \ens E_{\mathrm{out}(\ens F_1)} \cup\cdots\cup\ens E_{\mathrm{out}(\ens F_r)}, \]
  where for every $s=1,\dots,r$,
  \[ \ens E_{\mathrm{out}(\ens F_s)}
    \subseteq\set{(i,j)\colon i\in\argmin_{k\in\ens F_s}\vec{v}_k \text{ and } j\in\maxV}. \]
  Moreover for every $s=1,\dots,r$, if $\ens E_{\ens F_s}\neq\emptyset$,
  then $\abs{\ens E_{\mathrm{out}(\ens F_s)}}=1$.
\end{theorem}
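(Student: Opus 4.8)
The plan is to argue by forbidden patterns: assume $\ens E_\out$ is optimal and show that any deviation from the claimed shape would permit a single-row modification of the outlinks of one node that strictly increases $\prv^T\vec e_\intr$, contradicting optimality. The only engine needed is Theorem~\ref{thm:p>p-d>0}, which tells us such a modification strictly helps exactly when $\vec\delta^T\vec v>0$, together with the accessibility bookkeeping needed so that Assumption~A survives each modification.

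First I would establish two preliminary facts about any optimal configuration. (i) \emph{Every external outlink targets $\maxV$}: if $(i,j)\in\ens E_\out$ with $j\notin\maxV$, pick $j^\ast\in\maxV$ and redirect, replacing $(i,j)$ by $(i,j^\ast)$; this is a single-row change with $\vec\delta^T\vec v=(\vec v_{j^\ast}-\vec v_j)/\dg i>0$ which leaves the outdegree and the presence of an external outlink of $i$ intact (so Assumption~A still holds and only row $i$ changes), hence by Theorem~\ref{thm:p>p-d>0} it strictly raises the PageRank, a contradiction. (ii) \emph{The global inequality} $\max_{j\in\ext}\vec v_j<\min_{i\in\intr}\vec v_i$ \emph{holds}: taking $i_m\in\argmin_{\intr}\vec v$, Assumption~A and Lemma~\ref{lem:minmaxv}$(c)$ give $\vec v_{i_m}>\min_{k\ch{i_m}}\vec v_k$; the minimizing child cannot be internal (internal children lie in $\intr$, hence are $\ge\vec v_{i_m}$), so it is an external outlink target, which by~(i) lies in $\maxV$, forcing $\max_\ext\vec v<\min_\intr\vec v$.

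With~(ii) in hand, deleting a badly placed outlink becomes strictly profitable. For a single outlink deletion $(i,j)$ one has $\vec\delta^T\vec v=\big(\sum_{k\ch i}(\vec v_k-\vec v_j)\big)/(\dg i(\dg i-1))$ as in the proof of Proposition~\ref{prop:remove-link}, so deletion strictly helps as soon as $i$ has one child whose $\vec v$-value strictly exceeds $\vec v_j$. I would use this twice. \emph{No outlink starts outside the final classes}: a node $i\in\intr$ lying in no final class has an internal outlink (otherwise $\set i$ would itself be a final class of $(\intr,\ens E_\intr)$) and reaches some final class $\ens F_s$ through internal links; an external outlink of $i$ targets $\maxV$ by~(i), so $\vec v_j=\max_\ext\vec v$, while its internal child has $\vec v$-value $>\max_\ext\vec v$ by~(ii); deleting the outlink thus strictly raises the PageRank, and Assumption~A survives since $i$ keeps its internal route to $\ens F_s$ and no final class loses an outlink. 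This yields $\ens E_\out=\ens E_{\mathrm{out}(\ens F_1)}\cup\cdots\cup\ens E_{\mathrm{out}(\ens F_r)}$.

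It remains to pin down the outlinks inside a fixed final class $\ens F_s$. By Lemma~\ref{lem:minmaxv}$(c)$ and Assumption~A, any node attaining $\min_{k\in\ens F_s}\vec v_k$ has a child below that minimum which, since finality keeps internal children inside $\ens F_s$ (hence $\ge$ the minimum), must be an external outlink; so every such minimizer carries an outlink. Conversely, if an outlink started at $i\in\ens F_s$ with $\vec v_i$ strictly above $\min_{\ens F_s}\vec v$, then (when $\ens E_{\ens F_s}\neq\emptyset$) $i$ has an internal child with value $\ge\min_{\ens F_s}\vec v>\max_\ext\vec v=\vec v_j$, so deleting this outlink strictly helps while the minimizer's outlink keeps $\ens F_s$ accessible, a contradiction; hence every source lies in $\argmin_{k\in\ens F_s}\vec v_k$ and every target in $\maxV$. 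The same computation rules out two coexisting outlinks from $\ens F_s$ when $\ens E_{\ens F_s}\neq\emptyset$: remove one, the other keeps $\ens F_s$ accessible and the deleted source retains an internal child exceeding $\max_\ext\vec v$, giving a strict increase; since $\ens F_s$ must keep at least one outlink under Assumption~A, exactly one remains. The main obstacle throughout is exactly this accessibility bookkeeping: a naive ``move the outlink to the minimizer'' fails, because adding an outlink to a node that already exits to $\maxV$ strictly \emph{decreases} the PageRank, so the argument must proceed by pure deletions, which is legitimate only once the global inequality~(ii) guarantees that deleting any $\maxV$-valued outlink from a node with an internal child is strictly beneficial.
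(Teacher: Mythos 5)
Your proof is correct and follows essentially the same route as the paper's: both are forbidden-pattern arguments driven by Theorem~\ref{thm:p>p-d>0}, using the redirection of external outlinks towards $\maxV$, the key inequality $\max_{k\in\ext}\vec v_k<\min_{k\in\intr}\vec v_k$, and deletions of non-minimal outlinks with the same accessibility bookkeeping. The only gloss (shared with the paper) is the redirection step when the chosen $j^\ast\in\maxV$ is already a child of $i$, in which case one should instead delete $(i,j_{\min})$ with $j_{\min}\in\argmin_{k\ch{i}}\vec v_k$, keeping the link $(i,j^\ast)$ to preserve Assumption~A.
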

\begin{proof}
  Let $\ens E_\intr$, $\ens E_\inn$ and $\ens E_\ext$ be given.
  Suppose $\ens E_\out$ is such that $\prv^T\vec e_\intr$ is maximal under
  Assumption~A.

  \new{We will determine the possible leaking nodes of $\intr$ by
  analyzing three different cases.}

  \new{Firstly, let us consider some node $i\in\intr$ such that $i$ 
  does not have children in $\intr$, i.e.\
  $\set{k\in\intr\colon (i,k)\in\ens E_\intr}=\emptyset$.
  Then clearly we have $\set{i}=\ens F_s$ for some $s=1,\dots,r$, with
  $i\in\argmin_{k\in\ens F_s}\vec v_k$ and $\ens E_{\ens F_s}=\emptyset$.
  From Assumption~A, we have $\ens E_{\mathrm{out}(\ens F_s)}\neq\emptyset$, and
  from Theorem~\ref{thm:p>p-d>0} and the optimality assumption, we have
  $\ens E_{\mathrm{out}(\ens F_s)}\subseteq\set{(i,j)\colon j\in\maxV}$
  (see Figure~\ref{fig:out-opt-fig2}).}
  \begin{figure}[!htb]
  \centering
  {\includegraphics{figuresthms.2}}%
    \caption{If $\vec v_j<\vec v_\ell$,
    then $\prvt^T\vec e_\intr>\prv^T\vec e_\intr$ with
    $\widetilde{\ens E}_\out=\ens E_\out\cup\set{(i,\ell)}\setminus\set{(i,j)}$.}
    \label{fig:out-opt-fig2}
  \end{figure}

  Secondly, let us consider some $i\in\intr$ such that \new{$i$ has children in $\intr$, i.e.\
  $\set{k\in\intr\colon (i,k)\in\ens E_\intr}\neq\emptyset$ and}
  \[\new{ \vec v_i\le\min_{\substack{k\ch{i}\\k\in\intr}}\vec v_k}. \]
  Let $j\in\argmin_{k\ch{i}}\vec v_k$.
  Then $j\in\ext$ and $\vec v_j<\vec v_i$ by Lemma~\ref{lem:minmaxv}$(c)$.
  Suppose by contradiction that the node $i$ would keep an access to $\ext$ if we took
  $\widetilde{\ens E}_\out=\ens E_\out\setminus\set{(i,j)}$ instead of $\ens E_\out$.
  Then, by Proposition~\ref{prop:remove-link}, considering $\widetilde{\ens E}_\out$
  instead of $\ens E_\out$ would increase strictly the PageRank of $\intr$
  while Assumption~A remains satisfied
  (see Figure~\ref{fig:out-opt-fig1}).
  \begin{figure}[!htb]
  \centering
  {\includegraphics{figuresthms.1}}
   \caption{If $\vec v_j=\min_{k\ch{i}}\vec v_k$
    and $i$ has another access to $\ext$, then $\prvt^T\vec e_\intr>\prv^T\vec e_\intr$
    with $\widetilde{\ens E}_\out=\ens E_\out\setminus\set{(i,j)}$.}
    \label{fig:out-opt-fig1}
  \end{figure}
  This would contradict the
  optimality \new{assumption} for $\ens E_\out$.
  From this, we conclude that\vspace{-1ex}
  \begin{itemize}
    \item the node $i$ belongs to final class $\ens F_s$ of \new{the subgraph}
       $(\intr,\ens E_\intr)$ with $\ens E_{\ens F_s}\neq\emptyset$
       for some $s=1,\dots,r$;\vspace{-1ex}
    \item there does not exist another $\ell\in\ext$, $\ell\neq j$
       such that $(i,\ell)\in\ens E_\out$;\vspace{-1ex}
    \item there does not exist another $k$ in the same final class $\ens F_s$, $k\neq i$ such that
      such that $(k,\ell)\in\ens E_\out$ for some $\ell\in\ext$.\vspace{-1ex}
  \end{itemize}
  \new{Again, by Theorem~\ref{thm:p>p-d>0} and the optimality assumption,
  we have $j\in\ens V$ (see Figure~\ref{fig:out-opt-fig2}).}

  \new{Let us now notice that}
  \begin{equation}\label{eq:maxbarIv<minIv}
    \max_{k\in\ext} \vec{v}_k<\min_{k\in\intr}\vec v_k.
  \end{equation}
  \new{Indeed, with $i\in\argmin_{k\in\intr}\vec v_k$, we are in one of the two cases
  analyzed above for which we have seen that
  $\vec v_i>\vec v_j=\argmax_{k\in\ext} \vec{v}_k$.  }

  Finally, consider a node $i\in\intr$ that does not belong to any of the final classes
  of \new{the subgraph} $(\intr,\ens E_\intr)$.
  Suppose by contradiction that there exists $j\in\ext$ such that $(i,j)\in\ens E_\out$.
  Let $\ell\in\argmin_{k\ch{i}}\vec v_k$.
  Then it follows from inequality~\eqref{eq:maxbarIv<minIv} that $\ell\in\ext$.
  But the same argument as above shows that the link $(i,\ell)\in\ens E_\out$
  must be removed since $\ens E_\out$ is supposed to be optimal
  (see Figure~\ref{fig:out-opt-fig1} again).
  So, there does not exist $j\in\ext$ such that $(i,j)\in\ens E_\out$
  for a node $i\in\intr$ which does not belong to any of the final classes
  $\ens F_1,\dots,\ens F_r$.
\end{proof}

\begin{example}\label{ex:out-opt-ex}
  \new{Let us consider the graph given in Figure~\ref{fig:out-opt-ex}.
  The internal link structure $\ens E_\intr$, as well as $\ens E_\inn$ and $\ens E_\ext$ are given.
  The subgraph $(\intr,\ens E_\intr)$ has two final classes $\ens F_1$ and $\ens F_2$.
  With $c=0.85$ and $\vec z$ the uniform probability vector,
  this configuration has six optimal outlink structures
  (one of these solutions is represented by bold arrows in Figure~\ref{fig:out-opt-ex}).
  Each one can be written as
  $\ens E_\out=\ens E_{\mathrm{out}(\ens F_1)}\cup\ens E_{\mathrm{out}(\ens F_2)}$,
  with $\ens E_{\mathrm{out}(\ens F_1)}=\set{(4,6)}$ or
  $\ens E_{\mathrm{out}(\ens F_1)}=\set{(4,7)}$
  and $\emptyset\neq\ens E_{\mathrm{out}(\ens F_2)}\subseteq\set{(5,6),(5,7)}$.
  Indeed, since $\ens E_{\ens F_1}\neq\emptyset$, as stated by Theorem~\ref{thm:out-opt},
  the final class $\ens F_1$ has exactly one external outlink in every optimal outlink structure.
  On the other hand, the final class $\ens F_2$ may have several
  external outlinks, since it is composed of a unique node and moreover
  this node does not have a self-link. Note that $\ens V=\set{6,7}$
  in each of these six optimal configurations,
  but this set $\ens V$ can not be determined a priori since it depends on the
  chosen outlink structure.}
  \begin{figure}[!htb]
  \centering
  {\includegraphics{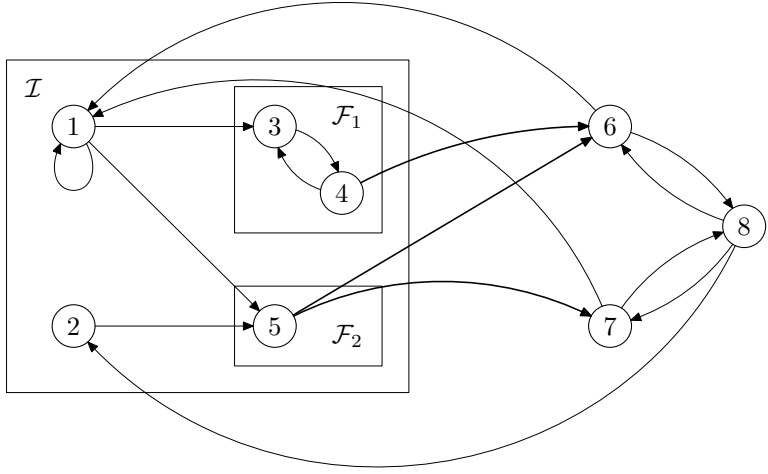}}%
    \caption{Bold arrows represent one of the six optimal \emph{outlink} structures
    for this configuration with two final classes (see Example~\ref{ex:out-opt-ex}).}
    \label{fig:out-opt-ex}
  \end{figure}
\end{example}

Now, let us determine the optimal \emph{internal} link structure $\ens E_\intr$ for the set $\intr$, while its outlink structure $\ens E_\out$ is given. Examples of optimal internal structure are given \new{after the proof of the theorem}.
\begin{theorem}[Optimal internal link structure]\label{thm:intr-opt}
  Let $\ens E_\out$, $\ens E_\inn$ and $\ens E_\ext$ be given.
  Let $\ens L=\set{i\in \intr\colon (i,j)\in\ens E_\out \text{ for some }j\in\ext}$
  be the set of leaking nodes \new{of} $\intr$ and
  let $n_{\ens L}=\abs{\ens L}$ \new{be} the number of leaking nodes.
  Let $\ens E_\intr$ such that the PageRank $\prv^T\vec e_\intr$ is maximal 
  under Assumption~A.
  Then there exists a permutation of the indices such that $\intr=\set{1,2,\dots,n_\intr}$,
  $\ens L=\set{n_\intr-n_{\ens L}+1,\dots,n_\intr}$,
  \[
    \vec v_1>\cdots>\vec v_{{n_\intr}-n_{\ens L}}
       >\vec v_{{n_\intr}-n_{\ens L}+1}\ge\cdots\ge\vec v_{{n_\intr}},
  \]
  and $\ens E_\intr$ has the following structure:
  \[\ens E_\intr^L\subseteq\ens E_\intr\subseteq\ens E_\intr^U,\]
  where
  \begin{align*}
    \ens E_\intr^L&=\set{(i,j)\in\intr\times\intr\colon j\le i}
       \cup \set{(i,j)\in(\intr\setminus\ens L)\times\intr\colon j=i+1},\\
    \ens E_\intr^U&=\ens E_\intr^L
        \cup\set{(i,j)\in\ens L\times\ens L\colon i<j}.
  \end{align*}
\end{theorem}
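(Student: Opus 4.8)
The plan is to take a configuration $\ens E_\intr$ that maximizes $\prv^T\vec e_\intr$ under Assumption~A and to read off its shape from forbidden-pattern arguments, as announced before the theorem. I first record that, since only leaking nodes have outlinks to $\ext$, a node of $\intr$ has an access to $\ext$ exactly when it has an internal access to $\ens L$; thus Assumption~A means every node of $\intr$ reaches $\ens L$ through $\ens E_\intr$. Two tools drive everything. \emph{Backward links:} if $i,j\in\intr$ satisfy $\vec v_j\ge\vec v_i$ and $(i,j)\notin\ens E_\intr$, then by Proposition~\ref{prop:add-link} together with Assumption~A adding $(i,j)$ strictly increases $\prv^T\vec e_\intr$ while preserving accessibility, contradicting optimality; hence every such link, self-links included, is present. \emph{Removal lemma:} for a node $i$ whose children do not all share the same $\vec v$-value, writing $j^\star\in\argmin_{k\ch{i}}\vec v_k$, Proposition~\ref{prop:remove-link} makes removing $(i,j^\star)$ strictly increase $\prv^T\vec e_\intr$, so by optimality that removal must break Assumption~A; since a simple path from the disconnected node to $i$ uses no edge leaving $i$, the node $i$ itself must lose access, i.e.\ in the optimal graph every access of $i$ to $\ens L$ runs through its worst child $j^\star$.

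Next I analyze a non-leaking node $i$. Its self-link is a backward child of value $\vec v_i$, while Lemma~\ref{lem:minmaxv}$(c)$ with Assumption~A gives $\vec v_i>\min_{k\ch{i}}\vec v_k$, so $i$ has a forward child and its children do not all share the same value; the removal lemma then applies, with $j^\star$ the minimal forward child. I claim $i$ has a \emph{single} forward child $f(i):=j^\star$. Indeed, any other forward child $c$ would have to reach $\ens L$ only through $i$ (else $i\to c$ followed by an access of $c$ avoiding $i$ would bypass $(i,j^\star)$); but this is impossible, since for $c\in\ens L$ it is absurd and for $c\notin\ens L$ the decreasing path of Lemma~\ref{lem:decreasing-path} from $c$ stays strictly below $\vec v_c<\vec v_i$ and hence reaches $\ens L$ while avoiding $i$. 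Consequently every access of $i$ to $\ens L$ uses the single edge $(i,f(i))$.

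The ordering now follows. If a leaking node $\ell$ had $\vec v_\ell\ge\vec v_i$ for some non-leaking $i$, then $(i,\ell)$ would be a backward link and $i\to\ell\in\ens L$ an access avoiding $(i,f(i))$ (as $\ell\ne f(i)$), which is impossible; hence $\max_{k\in\ens L}\vec v_k<\min_{k\in\intr\setminus\ens L}\vec v_k$. Likewise two non-leaking nodes cannot tie: if $\vec v_a=\vec v_b$ with $a\ne b$, the decreasing path from $b$ avoids every node of value $\ge\vec v_a$, in particular $a$, so $a\to b$ followed by it would avoid $(a,f(a))$. Relabelling $\intr$ by decreasing $\vec v$, non-leaking nodes first, yields the displayed strict/weak inequalities with $\ens L=\set{n_\intr-n_{\ens L}+1,\dots,n_\intr}$. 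It remains to see $f(i)=i+1$: were there $j'$ with $\vec v_i>\vec v_{j'}>\vec v_{f(i)}$, then if $j'$ is non-leaking the backward link $j'\to i$ plus the decreasing path from $i$ gives an access of $j'$ avoiding $(j',f(j'))$, while if $j'$ is leaking then so is $f(i)$ (its value lies below all non-leaking ones) and the swap $\widetilde{\ens E}=\ens E\setminus\set{(i,f(i))}\cup\set{(i,j')}$ yields $\vec\delta^T\vec v=(\vec v_{j'}-\vec v_{f(i)})/\dg{i}>0$, strictly raising the PageRank by Theorem~\ref{thm:p>p-d>0} while keeping Assumption~A (now $i\to j'\in\ens L$ directly). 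Both are contradictions.

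Assembling, the backward links give $\set{(i,j):j\le i}\subseteq\ens E_\intr$ and the unique forward children give $(i,i+1)\in\ens E_\intr$ for every $i\in\intr\setminus\ens L$, so $\ens E_\intr^L\subseteq\ens E_\intr$; conversely each edge of $\ens E_\intr$ is backward, a non-leaking chain link, or a forward link out of a leaking node, and since leaking nodes lie strictly below all non-leaking ones such forward links stay inside $\ens L$, giving $\ens E_\intr\subseteq\ens E_\intr^U$. The main obstacle is the reachability bookkeeping: every improving edge-modification must be shown to preserve Assumption~A \emph{globally}, not merely for the node being modified, and Lemma~\ref{lem:decreasing-path} is precisely what repeatedly supplies the alternative access avoiding a prescribed node; arranging the ordering and the two forward-child statements so that each is available when the next one needs it is the delicate point.
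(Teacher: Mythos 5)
Your proposal is correct and follows essentially the same route as the paper: all backward links are forced by Proposition~\ref{prop:add-link}, and Proposition~\ref{prop:remove-link} together with Lemma~\ref{lem:decreasing-path} forces every access of a non-leaking node to pass through its unique worst child, which then yields the ordering and the forward chain --- your ``every access of $i$ uses $(i,j^\star)$'' lemma is just the contrapositive of the paper's step that a child $m$ of a non-leaking $k$ having an access to $\ext$ avoiding $k$ must satisfy $\vec v_m<\vec v_k$. The only point left implicit is the tie case $\vec v_{f(i)}=\vec v_{i+1}$ with $f(i)\neq i+1$ (both leaking) when concluding $f(i)=i+1$, which, exactly as in the paper, is settled by permuting the two equal-valued leaking labels.
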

\begin{proof}
  Let $\ens E_\out$, $\ens E_\inn$ and $\ens E_\ext$ be given.
  Suppose $\ens E_\intr$ is such that $\prv^T\vec e_\intr$ is maximal under
  Assumption~A.

  Firstly, by Proposition~\ref{prop:add-link} and
  since every node \new{of} $\intr$ has an access to $\ext$,
  every node $i\in\intr$ links to every node $j\in\intr$ such that $\vec v_j\ge \vec v_i$
  (see Figure~\ref{fig:intr-opt-fig1}),
  that is
  \begin{equation}\label{eq:thm-all-pred}
    \set{(i,j)\in\ens E_\intr\colon \vec v_i\le \vec v_j}
    =\set{(i,j)\in\intr\times\intr\colon \vec v_i\le \vec v_j}.
  \end{equation}
  \begin{figure}[!htb]
  \centering
  {\includegraphics{figuresthms.3}}
    \caption{Every~$i\in\intr$ must link to every~$j\in\intr$ 
    \new{with} $\vec v_j\ge\vec v_i$.}
    \label{fig:intr-opt-fig1}
  \end{figure}

  Secondly, let $(k,i)\in\ens E_\intr$ such that $k\neq i$ and $k\in\intr\setminus\ens L$.
  Let us prove that, if the node $i$ has an access to $\ext$
  by a path $\langle i,i_1,\dots,i_s\rangle$ such that
  $i_j\neq k$ for all $j=1,\dots,s$ and $i_s\in\ext$, then $\vec v_i<\vec v_k$
  (see Figure~\ref{fig:intr-opt-fig3}).
  \begin{figure}[!htb]
  \centering
  {\includegraphics{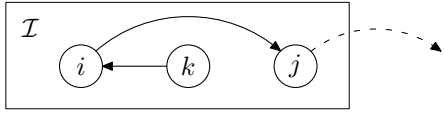}}
    \caption{The node $i$ can not have an access to $\ext$ without crossing $k$
    since in this case we should then have $\vec v_{i}<\vec v_k$.}
    \label{fig:intr-opt-fig3}
  \end{figure}
  Indeed, if we had $\vec v_k\le \vec v_i$ then, by Lemma~\ref{lem:minmaxv}$(c)$,
  there would exists $\ell\in\intr$ such that $(k,\ell)\in\ens E_\intr$ and
  $\vec v_\ell=\min_{j\gets k}\vec v_j<\vec v_i\le\vec v_k$.
  But, with $\widetilde{\ens E}_\intr=\ens E_\intr\setminus\set{(k,\ell)}$, we would have
  $\prvt^T\vec e_\intr>\prv^T\vec e_\intr$ by Proposition~\ref{prop:remove-link}
  while Assumption~A remains satisfied since the node $k$ would keep
  access to $\ext$ via the node $i$ (see Figure~\ref{fig:intr-opt-fig2}).
  \begin{figure}[!htb]
  \centering
  {\includegraphics{figuresthms.4}}%
    \caption{If $\vec v_\ell=\min_{j\ch{k}}\vec v_j$,
      then $\prvt^T\vec e_\intr>\prv^T\vec e_\intr$
      with $\widetilde{\ens E}_\out=\ens E_\out\setminus\set{(k,\ell)}$.}
    \label{fig:intr-opt-fig2}
  \end{figure}
  That contradicts the optimality \new{assumption}.
  This leads us to the conclusion that
  $\vec v_k>\vec v_i$ for every $k\in\intr\setminus\ens L$ and $i\in\ens L$.
  Moreover $\vec v_i\neq \vec v_k$ for every $i,k\in\intr\setminus\ens L$, $i\neq k$.
    Indeed, if we had $\vec v_i=\vec v_k$,
    then $(k,i)\in\ens E_\intr$ by~\eqref{eq:thm-all-pred}
    while by Lemma~\ref{lem:decreasing-path},
    the node $i$ would have an access to $\ext$ by a path independant from $k$.
    So we should have $\vec v_i<\vec v_k$.

  We conclude from this that we can relabel the nodes of $\ens N$ such that
  $\intr=\set{1,2,\dots n_\intr}$, $\ens L=\set{n_\intr-n_{\ens L}+1,\dots,n_\intr}$ and
  \begin{equation}\label{eq:vI>vF}
     \vec v_1>\vec v_2>\cdots>\vec v_{{n_\intr}-n_{\ens L}}>
     \vec v_{{n_\intr}-n_{\ens L}+1}\ge\cdots\ge\vec v_{{n_\intr}}.
  \end{equation}

  It follows also that, for $i\in\intr\setminus\ens L$ and $j>i$,
  $(i,j)\in\ens E_\intr \text{ if and only if } j=i+1$.
  Indeed, suppose first $i<n_\intr-n_{\ens L}$.
  Then, we cannot have $(i,j)\in\ens E_\intr$ with $j>i+1$
  since in this case we would contradict the ordering of the nodes given by equation~\eqref{eq:vI>vF}
  (see Figure~\ref{fig:intr-opt-fig3} again with $k=i+1$ and
  remember that by Lemma~\ref{lem:decreasing-path},
  node~$j$ has an access to~$\ext$ by a decreasing path).
  Moreover,
  node $i$ must link to some node $j>i$ in order
  to satisfy Assumption~A, so $(i,i+1)$ must belong to $\ens E_\intr$.
  Now, consider the case $i=n_\intr-n_{\ens L}$.
  Suppose we had $(i,j)\in\ens E_\intr$ with $j>i+1$.
  Let us first note that there can not exist
  two or more different links $(i,\ell)$ with $\ell\in\ens L$
  since in this case we could remove one of these links and increase strictly the PageRank
  of the set $\intr$. If $\vec v_j=\vec v_{i+1}$, we could relabel the nodes by permuting these
  two indices. If $\vec v_j<\vec v_{i+1}$, then with
  $\widetilde{\ens E}_\intr=\ens E_\intr\cup\set{(i,i+1)}\setminus\set{(i,j)}$,
  we would have $\prvt^T\vec e_\intr>\prv^T\vec e_\intr$ by Theorem~\ref{thm:p>p-d>0}
  while Assumption~A remains satisfied since the $i$ would keep
  access to $\ext$ via node $i+1$. That contradicts the optimality \new{assumption}.
  So we have proved that
  \begin{equation}\label{eq:thm-one-succ}
    \set{(i,j)\in\ens E_\intr\colon i<j \text{ and } i\in\intr\setminus\ens L}
    =\set{(i,i+1)\colon i\in\intr\setminus\ens L}.
  \end{equation}

  Thirdly, it is obvious that
    \begin{equation}\label{eq:thm-F-F}
    \set{(i,j)\in\ens E_\intr\colon i<j \text{ and } i\in\ens L}
    \subseteq\set{(i,j)\in\ens L\times\ens L\colon i<j}.
  \end{equation}

  The announced structure for a set $\ens E_\intr$ giving
  a maximal PageRank score $\prv^T\vec e_\intr$
  under Assumption~A
  now follows directly from equations~\eqref{eq:thm-all-pred},~\eqref{eq:thm-one-succ}
  and~\eqref{eq:thm-F-F}.
\end{proof}
\begin{example}\label{ex:intr-opt-ex}
\new{Let us consider the graphs given in Figure~\ref{fig:intr-opt-ex}.
  For both cases,
  the external outlink structure $\ens E_\out$ with two leaking nodes,
  as well as $\ens E_\inn$ and $\ens E_\ext$ are given.
  With $c=0.85$ and $\vec z$ the uniform probability vector,
  the optimal internal link structure for configuration~(a) is given by $\ens E_\intr=\ens E_\intr^L$,
  while in configuration~(b) we have $\ens E_\intr=\ens E_\intr^U$
  (bold arrows), with $\ens E_\intr^L$ and $\ens E_\intr^U$ defined in Theorem~\ref{thm:intr-opt}.}
  \begin{figure}[!htb]
  \centering
    \subfloat[][]{\includegraphics{figuresopts.3}}%
  \qquad\qquad
  \subfloat[][]{\includegraphics{figuresopts.2}}%
    \caption{Bold arrows represent optimal \emph{internal} link structures.
    In~(a) we have $\ens E_\intr=\ens E_\intr^L$, while $\ens E_\intr=\ens E_\intr^U$ in~(b).}
    \label{fig:intr-opt-ex}
  \end{figure}
\end{example}

Finally, combining the optimal outlink structure and the optimal internal link structure described in Theorems~\ref{thm:out-opt} and~\ref{thm:intr-opt}, we find the \emph{optimal linkage strategy} for a set of webpages. Let us note that, since we have here control on both $\ens E_\intr$ and $\ens E_\out$, there \new{are} no more cases of several final classes or several leaking nodes to consider. For an example of optimal link structure, see Figure~\ref{fig:optimal-link-structure}.
\begin{theorem}[Optimal link structure]\label{thm:website-opt}
  Let $\ens E_\inn$ and $\ens E_\ext$ be given.
  Let $\ens E_\intr$ and $\ens E_\out$ such that $\prv^T\vec e_\intr$
  is maximal under Assumption~A.
  Then there exists a permutation of the indices such that $\intr=\set{1,2,\dots,n_\intr}$,
  \[
    \vec v_1>\cdots>\vec v_{{n_\intr}}>\vec v_{{n_\intr}+1}\ge\cdots\ge\vec v_n,
  \]
  and $\ens E_\intr$ and $\ens E_\out$ have the following structure:
  \begin{align*}
    \ens E_\intr&=\set{(i,j)\in\intr\times\intr\colon j\le i \text{ or } j=i+1}, \\
    \ens E_\out&=\set{(n_\intr,n_\intr+1)}.
  \end{align*}
\end{theorem}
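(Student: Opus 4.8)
The plan is to combine Theorems~\ref{thm:out-opt} and~\ref{thm:intr-opt}, using that a configuration maximal in both $\ens E_\intr$ and $\ens E_\out$ is, in particular, maximal in $\ens E_\out$ when $\ens E_\intr$ is frozen (so Theorem~\ref{thm:out-opt} applies) and maximal in $\ens E_\intr$ when $\ens E_\out$ is frozen (so Theorem~\ref{thm:intr-opt} applies). The essential point is that both theorems describe the \emph{same} optimal graph, hence refer to one and the same vector $\vec v$. I would therefore fix this $\vec v$ once and relabel $\intr$ by decreasing value of $\vec v_i$, breaking the (a priori possible) ties only inside the leaking set $\ens L$, so that both theorems are read off the same labelling.

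First I would invoke Theorem~\ref{thm:intr-opt} (with $\ens E_\out$ frozen), which gives the ordering with $\ens L=\set{n_\intr-n_{\ens L}+1,\dots,n_\intr}$ and the sandwich $\ens E_\intr^L\subseteq\ens E_\intr\subseteq\ens E_\intr^U$. The only feature I need from this is that $\ens E_\intr$ contains every backward link: since $\ens E_\intr^L\supseteq\set{(i,j)\colon j\le i}$, in particular $(i,1)\in\ens E_\intr$ for every $i\in\intr$, and the self-link $(1,1)$ is present.

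Next I would prove that $(\intr,\ens E_\intr)$ has a unique final class. Let $\ens F$ be any final class and pick $i\in\ens F$; since $(i,1)\in\ens E_\intr$ and a final class has no outgoing internal link, node $1$ must belong to $\ens F$. As distinct final classes are disjoint, there is exactly one final class $\ens F_1$, and it contains node~$1$; moreover $\ens E_{\ens F_1}\neq\emptyset$ because of the self-link $(1,1)$. Now I apply Theorem~\ref{thm:out-opt} (with $\ens E_\intr$ frozen): a single, nontrivial final class forces $\ens E_\out=\ens E_{\mathrm{out}(\ens F_1)}$ with $\abs{\ens E_{\mathrm{out}(\ens F_1)}}=1$. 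Thus there is exactly one external outlink, i.e.\ $n_{\ens L}=1$.

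Finally I would feed $n_{\ens L}=1$ back into Theorem~\ref{thm:intr-opt}. Then $\ens L=\set{n_\intr}$ is a singleton, the set $\set{(i,j)\in\ens L\times\ens L\colon i<j}$ is empty, and the bounds collapse to $\ens E_\intr=\ens E_\intr^L=\ens E_\intr^U=\set{(i,j)\in\intr\times\intr\colon j\le i \text{ or } j=i+1}$, i.e.\ the forward chain $1\to2\to\cdots\to n_\intr$ together with all backward links; the ordering becomes strict, $\vec v_1>\cdots>\vec v_{n_\intr}$. With this $\ens E_\intr$ the unique final class is all of $\intr$, so $\argmin_{k\in\intr}\vec v_k=\set{n_\intr}$, and by Theorem~\ref{thm:out-opt} the single outlink starts at $n_\intr$ and ends in $\maxV$; relabelling its endpoint as $n_\intr+1$ gives $\ens E_\out=\set{(n_\intr,n_\intr+1)}$. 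Inequality~\eqref{eq:maxbarIv<minIv} yields $\max_{k\in\ext}\vec v_k<\vec v_{n_\intr}$, so ordering $\ext$ by decreasing $\vec v$ as $n_\intr+1,\dots,n$ produces the full chain $\vec v_1>\cdots>\vec v_{n_\intr}>\vec v_{n_\intr+1}\ge\cdots\ge\vec v_n$. The main obstacle is purely organizational: keeping both theorems tied to the same $\vec v$ and the same relabelling, and observing that the all-backward-links property forced by Theorem~\ref{thm:intr-opt} instantly yields a single final class. Once $n_{\ens L}=1$ is secured the collapse $\ens E_\intr^L=\ens E_\intr^U$ is automatic, and Assumption~A is clearly preserved since every node reaches $\ext$ along $i\to i+1\to\cdots\to n_\intr\to\ext$.
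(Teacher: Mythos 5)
Your proposal is correct and follows essentially the same route as the paper: apply Theorem~\ref{thm:intr-opt} to get all backward links, deduce a unique (nontrivial) final class containing node~$1$, apply Theorem~\ref{thm:out-opt} to force a single external outlink, and feed $n_{\ens L}=1$ back into Theorem~\ref{thm:intr-opt} to collapse the bounds to the forward chain with all backward links. Your explicit check that $\ens E_{\ens F_1}\neq\emptyset$ (via the self-link $(1,1)$), which is what licenses the $\abs{\ens E_{\mathrm{out}(\ens F_1)}}=1$ conclusion, is a detail the paper leaves implicit.
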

\begin{proof}
  Let $\ens E_\inn$ and $\ens E_\ext$ be given and suppose
  $\ens E_\intr$ and $\ens E_\out$ are such that $\prv^T\vec e_\intr$
  is maximal under Assumption~A.
  \new{Let us relabel the nodes of $\ens N$ such that $\intr=\set{1,2,\dots,n_\intr}$ and
  $\vec v_1\ge\cdots\ge\vec v_{n_\intr}>\vec v_{n_\intr+1}=\max_{j\in\ext}\vec v_j$.
  By Theorem~\ref{thm:intr-opt},
  $(i,j)\in\ens E_\intr$ for every nodes $i,j\in\intr$ such that $j\le i$.
  In particular, every node \new{of} $\intr$ has an access to node~$1$. Therefore, there is a unique
  final class $\ens F_1\subseteq\intr$ in the subgraph $(\intr,\ens E_\intr)$.
  So, by Theorem~\ref{thm:out-opt}, $\ens E_\out=\set{(k,\ell)}$
  for some $k\in\ens F_1$ and $\ell\in\ext$.
  Without loss of generality, we can suppose that $\ell=n_\intr+1$.
  By Theorem~\ref{thm:intr-opt} again, the leaking node $k=n_\intr$}
  and therefore $(i,i+1)\in\ens E_\intr$ for every node $i\in\set{1,\dots,n_\intr-1}$.
\end{proof}

Let us note that having a structure like described in Theorem~\ref{thm:website-opt} is a \emph{necessary but not sufficient} condition in order to have a maximal PageRank.
\begin{example}\label{ex:optimal-link-structure-conterexample}
  Let us show by an example that the graph structure given
  in Theorem~\ref{thm:website-opt}
  is not sufficient to have a maximal PageRank.
  Consider for instance the graphs in Figure~\ref{fig:optimal-link-structure-conterexample}.
  Let $c=0.85$ and a uniform personalization vector $\vec z = \frac{1}{n}\vun$.
  Both graphs have the link structure required Theorem~\ref{thm:website-opt} in
  order to have a maximal PageRank, with
  $\vec v_{(a)}=\begin{pmatrix}6.484&6.42&6.224&5.457\end{pmatrix}^T$
  and
  $\vec v_{(b)}=\begin{pmatrix}6.432&6.494&6.247&5.52\end{pmatrix}^T$.
  But the configuration~(a) is not optimal since in this
  case, the PageRank $\prv_{(a)}^T\vec e_\intr = 0.922$ is strictly less than the PageRank
  $\prv_{(b)}^T\vec e_\intr = 0.926$ obtained by the configuration~(b).
  Let us nevertheless note that, with a non uniform personalization vector
  $\vec z=\begin{pmatrix}0.7&0.1&0.1&0.1\end{pmatrix}^T$,
  the link structure~(a) would be optimal.
  \begin{figure}[!htb]
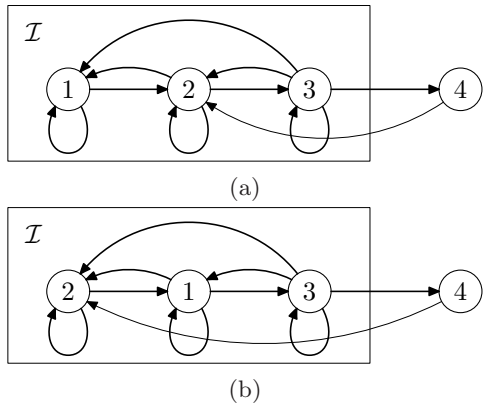

  \centering
  \subfloat[][]{\includegraphics{figures.5}}%
  \qquad\qquad
  \subfloat[][]{\includegraphics{figures.6}}%
    \caption{For $\intr=\set{1,2,3}$, $c=0.85$ and $\vec z$ uniform,
    the link structure in~(a) is not optimal \new{and yet}
    it satisfies the necessary conditions of Theorem~\ref{thm:website-opt}
    (see Example~\ref{ex:optimal-link-structure-conterexample}).}
    \label{fig:optimal-link-structure-conterexample}
  \end{figure}
\end{example}

\section{Extensions and variants}\label{sec:extensions-variants}
Let us now present some extensions and variants of the results of the previous section. We will first emphasize the role of parents of $\intr$.  Secondly, we will briefly talk about Avrachenkov--Litvak's optimal link structure for the case where $\intr$ is a singleton.  Then we will give variants of Theorem~\ref{thm:website-opt} when self-links are forbidden or when a minimal number of external outlinks is required. Finally, we will make some comments of the influence of external \emph{inlinks} on the PageRank of~$\intr$.

\subsection{Linking to parents}
If some node of $\intr$ has at least one parent in $\ext$ then the optimal linkage strategy for $\intr$ is to have an internal link structure like described in Theorem~\ref{thm:website-opt} together with {a single link to one of the parents} \new{of}~$\intr$.
\begin{corollary}[Necessity of linking to parents]
  Let $\ens E_\inn\new{\neq\emptyset}$ and $\ens E_\ext$ be given.
  Let $\ens E_\intr$ and $\ens E_\out$ such that $\prv^T\vec e_\intr$
  is maximal under Assumption~A.
  Then $\ens E_\out = \set{(i,j)}$,
  for some $i\in\intr$ and $j\in\ext$ such that $(j,k)\in\ens E_\inn$ for some $k\in\intr$.
\end{corollary}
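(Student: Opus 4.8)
The plan is to combine the structural result of Theorem~\ref{thm:website-opt} with the parents characterization of Lemma~\ref{lem:vmax-parents}. By Theorem~\ref{thm:website-opt}, under Assumption~A the optimal $\ens E_\out$ consists of a single outlink $\set{(n_\intr, n_\intr+1)}$, where the nodes are relabeled so that the target node $n_\intr+1 \in \ext$ satisfies $\vec v_{n_\intr+1} = \max_{j\in\ext}\vec v_j$. In other words, the unique outlink points to a node in the set $\maxV = \argmax_{j\in\ext}\vec v_j$. This already gives us $\ens E_\out = \set{(i,j)}$ with $i = n_\intr \in \intr$ and $j = n_\intr+1 \in \maxV$; it remains only to show that this $j$ is a parent of $\intr$, i.e.\ that $(j,k)\in\ens E_\inn$ for some $k\in\intr$.

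The second step is where Lemma~\ref{lem:vmax-parents} does the work. Since we assume $\ens E_\inn \neq \emptyset$, the lemma tells us directly that
\[
  \maxV \subseteq \set{j\in\ext\colon \text{ there exists } \ell\in\intr \text{ such that } (j,\ell)\in\ens E_\inn}.
\]
Because the target node $j = n_\intr+1$ of the unique optimal outlink lies in $\maxV$, it must therefore be a parent of some node $\ell \in \intr$, which is exactly the conclusion $(j,k)\in\ens E_\inn$ for some $k\in\intr$. This closes the argument.

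The only subtlety I would take care to address explicitly is the consistency of the set $\maxV$ across the two theorems: $\maxV$ is defined relative to the vector $\vec v$, which itself depends on the chosen link structure $\ens E_\intr, \ens E_\out$. Thus I must verify that the $\maxV$ appearing in Theorem~\ref{thm:out-opt} (which forces $j\in\maxV$) is computed with respect to the \emph{optimal} configuration, and that Lemma~\ref{lem:vmax-parents} is then applied to that same optimal $\vec v$. Since both Theorem~\ref{thm:website-opt} and Lemma~\ref{lem:vmax-parents} refer to the same underlying graph (the optimal one) and the same $\vec v$, there is no inconsistency, but it is worth stating so the reader does not worry about $\maxV$ shifting. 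I expect no real obstacle here: the corollary is essentially a direct packaging of Theorem~\ref{thm:website-opt} together with Lemma~\ref{lem:vmax-parents}, with the hypothesis $\ens E_\inn\neq\emptyset$ present precisely to invoke the nontrivial branch of the latter lemma.
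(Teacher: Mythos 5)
Your proof is correct and follows exactly the paper's (one-line) argument: the paper also derives the corollary as a direct consequence of Theorem~\ref{thm:website-opt} (which forces a unique outlink whose target maximizes $\vec v_j$ over $\ext$, hence lies in $\maxV$) combined with Lemma~\ref{lem:vmax-parents} (which places $\maxV$ inside the parents of $\intr$ when $\ens E_\inn\neq\emptyset$). Your added remark that $\maxV$ must be taken with respect to the optimal configuration's $\vec v$ is a sensible clarification that the paper leaves implicit.
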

\begin{proof}
  This is a direct consequence of Lemma~\ref{lem:vmax-parents} and Theorem~\ref{thm:website-opt}.
\end{proof}
Let us nevertheless remember that not every parent of nodes of $\intr$ will give an optimal link structure, as we have already discussed in Example~\ref{ex:parents-not-sufficient} and we develop now.
\begin{example}\label{ex:parents-not-sufficient-single}
Let us continue Example~\ref{ex:parents-not-sufficient}. We consider the graph in Figure~\ref{fig:parents-not-sufficient} as basic absorbing graph for $\intr=\set{1}$, that is $\ens E_\inn$ and $\ens E_\ext$ are given. We take $c=0.85$ as damping factor and a uniform personalization vector $\vec z=\frac{1}{n}\vun$. We have seen in Example~\ref{ex:parents-not-sufficient} than $\maxV_0=\set{2,3,4}$. Let us consider the value of the PageRank $\prv_1$ for different sets $\ens E_\intr$ and $\ens E_\out$:
\begin{center}
\begin{tabular}{c|ccccc}
  \multicolumn{6}{c}{$\quad\quad\quad\quad\quad\ens E_\out$}\\
  &$\emptyset$&$\set{(1,2)}$& $\set{(1,5)}$&$\set{(1,6)}$&$\set{(1,2),(1,3)}$\\
  \hline
  $\ens E_\intr=\emptyset$&$\diagup$&$0.1739$&$0.1402$&$0.1392$&$0.1739$\\
  $\ens E_\intr=\set{(1,1)}$&$0.5150$&$0.2600$&$0.2204$&$0.2192$&$0.2231$
\end{tabular}
\end{center}
As expected from Corollary~\ref{cor:opt-single}, the optimal linkage strategy for $\intr=\set{1}$ is to have a self-link and a link to one of the nodes $2$, $3$ or $4$. We note also that a link to node $6$, which is a parent of node $1$ provides a lower PageRank that a link to node $5$, which is not parent of $1$.
Finally, if we suppose self-links are forbidden (see below), then the optimal linkage strategy is to link to one \emph{or more} of the nodes $2,3,4$.
\end{example}

In the case where no node of $\intr$ has a parent in $\ext$, then \emph{every} structure  like described in Theorem~\ref{thm:website-opt} will give an optimal link structure.
\begin{proposition}[No external parent]
  Let $\ens E_\inn$ and $\ens E_\ext$ be given.
  Suppose that $\ens E_\inn=\emptyset$.
  Then the PageRank $\prv^T\vec e_\intr$ is maximal under Assumption~A
  if and only if
  \begin{align*}
    \ens E_\intr&=\set{(i,j)\in\intr\times\intr\colon j\le i \text{ or } j=i+1}, \\
    \ens E_\out&=\set{(n_\intr,n_\intr+1)}.
  \end{align*}
  for some permutation of the indices such that $\intr=\set{1,2,\dots,n_\intr}$.
\end{proposition}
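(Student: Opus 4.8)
The plan is to prove this as a corollary of Theorem~\ref{thm:website-opt} together with the special structure imposed by the hypothesis $\ens E_\inn=\emptyset$. The statement is an \emph{if and only if}, so I would treat the two directions separately. The forward direction (maximal PageRank implies the described structure) is already essentially contained in Theorem~\ref{thm:website-opt}, since that theorem gives exactly this necessary structure for \emph{any} given $\ens E_\inn$ and $\ens E_\ext$ under Assumption~A. Thus the only real work lies in the converse: showing that when $\ens E_\inn=\emptyset$, \emph{every} configuration with the prescribed internal forward-and-backward chain and single terminal outlink achieves the maximum.

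The key observation driving the converse is Lemma~\ref{lem:vmax-parents}: when $\ens E_\inn=\emptyset$, we have $\vec v_j=0$ for every $j\in\ext$. Probabilistically, this says a random surfer starting outside $\intr$ never revisits $\intr$ before zapping, which is intuitively clear since there are no links back into $\intr$. The plan is to exploit this to show that the PageRank $\prv^T\vec e_\intr$ depends only on the internal dynamics of $\intr$ and not on \emph{which} external node the unique leaking outlink points to, nor on the precise labelling. First I would use equation~\eqref{eq:PR-zv}, namely $\prv^T\vec e_\intr=(1-c)\vec z^T\vec v$, and the block decomposition~\eqref{eq:vintr-vext}. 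With $\ens E_\inn=\emptyset$ we have $P_\inn=0$, so~\eqref{eq:vintr-vext} collapses to $\vec v_\ext=0$ and $\vec v_\intr=(I-cP_\intr)^{-1}\vun$ (using $\vec v_\intr=cP_\intr\vec v_\intr+cP_\out\vec v_\ext+\vun$ with $\vec v_\ext=0$). Hence $\vec v_\intr$, and therefore the whole PageRank of $\intr$, is governed entirely by the internal scaled adjacency matrix $P_\intr$.

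The core computation is then to verify that all configurations satisfying the prescribed structure yield the \emph{same} value of $(I-cP_\intr)^{-1}\vun$ restricted appropriately, or at least that they all attain the common maximum. Since the internal link structure is $\ens E_\intr=\set{(i,j)\in\intr\times\intr\colon j\le i \text{ or } j=i+1}$ in every admissible configuration, the matrix $P_\intr$ has a fixed combinatorial pattern; the outdegrees are $\dg{i}=i+1$ for $i<n_\intr$ and $\dg{n_\intr}=n_\intr$ (one self-link, all backward links, and the single forward link except at the last node which instead leaks out). I would argue that the value of $\prv^T\vec e_\intr$ is invariant under relabelling consistent with a decreasing $\vec v$-ordering, so that the choice of external target $n_\intr+1\in\ext$ is immaterial precisely because $\vec v_{n_\intr+1}=0$ contributes nothing to $\vec v_\intr$ through the (now vanishing) coupling. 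The forward direction guarantees the maximizers have this form; the converse must show no maximizer is lost, i.e.\ the prescribed family is exactly the optimal set.

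The main obstacle I anticipate is handling the quantifier \emph{``for some permutation of the indices''} rigorously in the converse direction: I must show that \emph{any} graph whose internal structure is a forward chain with all backward links and whose external outlink is a single link from the chain's last node is optimal, regardless of how the external target and the personalization weights are distributed. The delicate point is that with $\ens E_\inn=\emptyset$ the external nodes and the damping-factor leakage no longer feed back into $\intr$, so one must confirm that the maximum of $\prv^T\vec e_\intr$ over all admissible $(\ens E_\intr,\ens E_\out)$ is genuinely attained by this whole family and not merely a subset picked out by $\vec z$. I would resolve this by noting that Theorem~\ref{thm:website-opt} already pins the maximizers into this structural class, and then invoking the $\vec v_\ext=0$ reduction to show the PageRank value is constant across the class, completing the equivalence. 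I expect the verification that the value is constant—rather than the structural characterization—to require the most care.
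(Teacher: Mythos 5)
Your proposal follows essentially the same route as the paper: the paper's entire proof is the observation that $\ens E_\inn=\emptyset$ forces $P_\inn=0$, so the block decomposition~\eqref{eq:vintr-vext} collapses to $\vec v=\bigl((I-cP_\intr)^{-1}\vun,\,0\bigr)$ and $\prv^T\vec e_\intr=(1-c)\vec z^T\vec v$ then depends only on the internal structure, with the forward direction supplied by Theorem~\ref{thm:website-opt}. You spell out the converse (and the role of the permutation) in more detail than the paper, which treats it as immediate, but the key reduction is identical.
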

\begin{proof}
  This follows directly from $\prv^T\vec e_\intr=(1-c)\vec z^T\vec v$ and the fact that,
  if $\ens E_\inn=\emptyset$,
  \[
    \vec v=(I-cP)^{-1}\vec e_\intr=
    \begin{pmatrix}(I-cP_\intr)^{-1}\vun\\0\end{pmatrix},
  \]
  up to a permutation of the indices.
\end{proof}

\subsection{Optimal linkage strategy for a singleton}
The optimal outlink structure for a {single webpage} has already been given by Avrachenkov and Litvak in~\cite{AL06}. Their result becomes a particular case of Theorem~\ref{thm:website-opt}. Note that in the case of a single node, the possible choices for $\ens E_\out$ can be found a~priori by considering the basic absorbing graph, since $\maxV=\maxV_0$.
\begin{corollary}[Optimal link structure for a single node]\label{cor:opt-single}
  Let $\intr=\set{i}$ and let $\ens E_\inn$ and $\ens E_\ext$ be given.
  Then the PageRank $\prv_i$
  is maximal under Assumption~A
  if and only if $\ens E_\intr=\set{(i,i)}$ and $\ens E_\out=\set{(i,j)}$
  for some $j\in\maxV_0$.
\end{corollary}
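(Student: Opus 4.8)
The plan is to derive Corollary~\ref{cor:opt-single} as a specialization of Theorem~\ref{thm:website-opt} for the necessity direction, and then to supply a separate ``common value'' argument for sufficiency, both of which lean on Lemma~\ref{lem:V=V0-single}.

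First, for the necessity (``only if'') direction, I would simply apply Theorem~\ref{thm:website-opt} to the singleton $\intr=\set{i}$, in which case $n_\intr=1$. The structure prescribed by the theorem then collapses: the only pair of $\intr\times\intr$ is $(i,i)$, which satisfies $j\le i$, while $j=i+1\notin\intr$ contributes nothing, so that $\ens E_\intr=\set{(i,i)}$. Likewise $\ens E_\out=\set{(n_\intr,n_\intr+1)}$ becomes a single outlink $\set{(i,j)}$ whose target $j=n_\intr+1$ is, by the ordering $\vec v_{n_\intr}>\vec v_{n_\intr+1}\ge\cdots$, a node of $\ext$ maximizing $\vec v$, i.e.\ $j\in\maxV$. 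Invoking Lemma~\ref{lem:V=V0-single}, which yields $\maxV=\maxV_0$ for singletons, we conclude $j\in\maxV_0$.

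The more delicate direction is sufficiency, since Theorem~\ref{thm:website-opt} only gives a necessary condition in general (recall Example~\ref{ex:optimal-link-structure-conterexample}). The idea is that, for a singleton, \emph{every} configuration meeting the stated conditions produces exactly the same PageRank, so necessity automatically upgrades to sufficiency. Concretely, Lemma~\ref{lem:V=V0-single} gives $\vec v=\alpha\,\vec v_0$ for a scalar $\alpha\neq0$, where $\vec v_0=(I-cP_0)^{-1}\vec e_i$ is fixed by the given $\ens E_\inn,\ens E_\ext$; hence by~\eqref{eq:PR-zv} we have $\prv_i=(1-c)\,\alpha\,\vec z^T\vec v_0$, a strictly increasing function of $\alpha$ alone. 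Moreover, the block expression~\eqref{eq:vintr-vext} shows that the direction of $\vec v$, and thus each ratio $\vec v_j/\vec v_i$ for $j\in\ext$, is independent of $\ens E_\intr$ and $\ens E_\out$. Writing the single scalar relation $\vec v_i=1+c\sum_{k\ch{i}}\vec v_k/\dg{i}$ for the configuration $\ens E_\intr=\set{(i,i)}$, $\ens E_\out=\set{(i,j)}$ (so $\dg{i}=2$ and the children are $i$ and $j$), one solves $\vec v_i\bigl(1-\tfrac{c}{2}(1+\vec v_j/\vec v_i)\bigr)=1$, which depends only on the fixed ratio $\vec v_j/\vec v_i$. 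That ratio attains its common maximal value for every $j\in\maxV_0$, so all these configurations share one and the same $\vec v_i$, hence the same $\alpha$ and the same $\prv_i$. Since the necessity part already guarantees this common value is the maximum, each such configuration is optimal.

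The main obstacle is exactly this sufficiency step: one must argue that the necessary structure is also sufficient, which fails for general $\intr$ but holds for a singleton precisely because $\vec v_\intr$ is one-dimensional, as encoded in Lemma~\ref{lem:V=V0-single}. The surrounding computations---the collapse of the structure from Theorem~\ref{thm:website-opt} and the check that $\vec v_i$ depends only on the fixed ratio $\vec v_j/\vec v_i$---are routine.
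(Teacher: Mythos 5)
Your proposal is correct and follows essentially the same route as the paper, whose entire proof is the one-line citation ``this follows directly from Lemma~\ref{lem:V=V0-single} and Theorem~\ref{thm:website-opt}.'' Your explicit sufficiency argument---observing that the ratio $\vec v_j/\vec v_i$ is fixed by $\ens E_\inn$ and $\ens E_\ext$ via~\eqref{eq:vintr-vext}, so every admissible choice of $j\in\maxV_0$ yields the same $\vec v_i$ and hence the same $\prv_i$---is precisely the detail the paper leaves implicit in invoking Lemma~\ref{lem:V=V0-single}, and you are right to flag that this step is what upgrades the merely necessary condition of Theorem~\ref{thm:website-opt} to an equivalence in the singleton case.
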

\begin{proof}
  This follows directly from Lemma~\ref{lem:V=V0-single} and Theorem~\ref{thm:website-opt}.
\end{proof}

\subsection{Optimal linkage strategy under additional assumptions}  Let us consider the problem of maximizing the PageRank $\prv^T\vec e_\intr$ when \emph{self-links are forbidden}. Indeed, it seems to be often supposed that Google's PageRank algorithm does not take self-links \new{into} account. In this case, Theorem~\ref{thm:website-opt} can be adapted readily for the case where $\abs{\intr}\ge2$.  When $\intr$ is a singleton, we must have $\ens E_\intr=\emptyset$, so $\ens E_\out$ can contain \emph{several} links, as stated in Theorem~\ref{thm:out-opt}.
\begin{corollary}[Optimal link structure with no self-links]
  Suppose $\abs{\intr}\ge2$.
  Let $\ens E_\inn$ and $\ens E_\ext$ be given.
  Let $\ens E_\intr$ and $\ens E_\out$ such that $\prv^T\vec e_\intr$
  is maximal under Assumption~A and assumption that
  there does not exist $i\in\intr$ such that $\set{(i,i)}\in\ens E_\intr$.
  Then there exists a permutation of the indices such that $\intr=\set{1,2,\dots,n_\intr}$,
    $\vec v_1>\cdots>\vec v_{{n_\intr}}>\vec v_{{n_\intr}+1}\ge\cdots\ge\vec v_n$,
  and $\ens E_\intr$ and $\ens E_\out$ have the following structure:
  \begin{align*}
    \ens E_\intr&=\set{(i,j)\in\intr\times\intr\colon j< i \text{ or } j=i+1}, \\
    \ens E_\out&=\set{(n_\intr,n_\intr+1)}.
  \end{align*}
\end{corollary}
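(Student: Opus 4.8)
The plan is to re-run the proof of Theorem~\ref{thm:website-opt}, together with the supporting arguments of Theorems~\ref{thm:out-opt} and~\ref{thm:intr-opt}, but now restricting every application of Proposition~\ref{prop:add-link} to links between \emph{distinct} nodes. The guiding observation is that self-links enter those proofs in exactly one identifiable place, so forbidding them changes the conclusion in a single, controlled way.

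First I would locate that place. A self-link $(i,i)$ is internal, so it never contributes to a node's access to $\ext$; consequently Assumption~A and all the access-preservation steps in the removal arguments (those using Proposition~\ref{prop:remove-link}, Lemma~\ref{lem:minmaxv} and Lemma~\ref{lem:decreasing-path}) are insensitive to the presence or absence of self-links, since the relevant decreasing paths to $\ext$ always pass through distinct nodes. The external-outlink analysis of Theorem~\ref{thm:out-opt} likewise only manipulates links of $\ens E_\out$ and links between distinct internal nodes. The \emph{only} step that forces a self-link into an optimal structure is the application of Proposition~\ref{prop:add-link} giving equation~\eqref{eq:thm-all-pred}: there one argues that $i$ must link to every $j$ with $\vec v_j\ge\vec v_i$, and for $j=i$ this is the self-link $(i,i)$. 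Under the new constraint this conclusion simply drops the diagonal and reads $\set{(i,j)\in\ens E_\intr\colon i\neq j,\ \vec v_i\le\vec v_j}=\set{(i,j)\in\intr\times\intr\colon i\neq j,\ \vec v_i\le\vec v_j}$.

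With this single modification, I would then reproduce the three conclusions of Theorem~\ref{thm:intr-opt}: the strict ordering $\vec v_1>\cdots>\vec v_{n_\intr}$ (the distinctness argument uses Lemma~\ref{lem:decreasing-path} and a removal on a link between two \emph{different} non-leaking nodes, hence is unaffected), the forward-chain identity~\eqref{eq:thm-one-succ}, and the leaking-to-leaking inclusion~\eqref{eq:thm-F-F}. Because $\abs{\intr}\ge2$, the forward chain is nonempty and, together with the single external outlink, already gives every node of $\intr$ an access to $\ext$, so Assumption~A is satisfiable without self-links; this is exactly where the hypothesis $\abs{\intr}\ge2$ is used, the singleton case being genuinely different and governed instead by Theorem~\ref{thm:out-opt}. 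Feeding the modified~\eqref{eq:thm-all-pred} into the proof of Theorem~\ref{thm:website-opt}, the full backward set $\set{(i,j)\colon j\le i}$ is replaced by its off-diagonal part $\set{(i,j)\colon j<i}$, while the unique-final-class and unique-external-outlink conclusions are untouched (node $1$ still reaches everything through the chain and every other node still reaches node $1$ through strict backward links). This yields $\ens E_\intr=\set{(i,j)\in\intr\times\intr\colon j<i\text{ or }j=i+1}$ and $\ens E_\out=\set{(n_\intr,n_\intr+1)}$, as claimed.

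The main obstacle I anticipate is one of bookkeeping rather than of new ideas: one must check that \emph{no} step of the original proofs secretly used a self-link to certify access to $\ext$ or to realise a decreasing path, so that removing the diagonal breaks neither Assumption~A nor the strict ordering. Since every access path and every decreasing path produced by Lemma~\ref{lem:decreasing-path} consists of pairwise distinct nodes, and since a self-link can never be the link that lets a node escape $\intr$, this verification goes through; the delicate point is simply to confirm it at each invocation rather than to prove anything substantially new.
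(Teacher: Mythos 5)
Your proposal is correct and follows the same route the paper intends: the paper offers no separate proof of this corollary, saying only that Theorem~\ref{thm:website-opt} ``can be adapted readily'', and your adaptation---dropping the diagonal from equation~\eqref{eq:thm-all-pred} and checking that no access-preservation or decreasing-path argument (Lemma~\ref{lem:minmaxv}, Lemma~\ref{lem:decreasing-path}, Propositions~\ref{prop:add-link} and~\ref{prop:remove-link}) ever relies on a self-link---is exactly that adaptation, carried out correctly. One small imprecision: the hypothesis $\abs{\intr}\ge2$ is not really needed to make Assumption~A satisfiable (a singleton satisfies it with an external outlink alone); its true role is to ensure that the unique final class $\ens F_1$ contains two distinct nodes joined by a backward link, so that $\ens E_{\ens F_1}\neq\emptyset$ and Theorem~\ref{thm:out-opt} forces $\abs{\ens E_\out}=1$, whereas for a singleton with no self-link one has $\ens E_{\ens F_1}=\emptyset$ and several outlinks become admissible, as the paper's next corollary records.
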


\begin{corollary}[Optimal link structure for a single node with no self-link]
  Suppose $\intr=\set{i}$.
  Let $\ens E_\inn$ and $\ens E_\ext$ be given.
  Suppose $\ens E_\intr=\emptyset$.
  Then the PageRank $\prv_i$
  is maximal under Assumption~A
  if and only if $\emptyset\neq\ens E_\out\subseteq\maxV_0$.
\end{corollary}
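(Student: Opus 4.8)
The plan is to reduce the optimization to a one-dimensional computation using the special structure of a singleton. First I would observe that the hypotheses force $\ens E_\intr=\emptyset$: the only internal link available to the unique node $i$ is the self-link $(i,i)$, which is excluded. Since every node must have at least one outlink, node $i$ must therefore have an external outlink, so $\ens E_\out\neq\emptyset$ is forced and Assumption~A is automatically satisfied. Hence the only freedom is the choice of the set of children of $i$ among $\ext$, and I must show that $\prv_i$ is maximal exactly when this set is a nonempty subset of $\maxV_0$.

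The key tool is Lemma~\ref{lem:V=V0-single}, which for $\intr=\set{i}$ gives $\vec v=\vec v_i\,\vec w$ for a vector $\vec w$ (with $\vec w_i=1$) depending only on the given data $\ens E_\inn$ and $\ens E_\ext$, not on $\ens E_\out$; its external part is $\vec w_{\ext}=c(I-cP_{\ext})^{-1}P_{\inn}$ by equation~\eqref{eq:vintr-vext}, and $\maxV=\argmax_{j\in\ext}\vec w_j=\maxV_0$. From $\prv_i=(1-c)\vec z^T\vec v=(1-c)\,\vec v_i\,\vec z^T\vec w$ with $\vec z>0$ and $\vec w$ fixed and nonnegative, the scalar $\vec z^T\vec w$ is a fixed positive constant; thus maximizing $\prv_i$ is equivalent to maximizing the scalar $\vec v_i$.

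Next I would make $\vec v_i$ explicit. The defining relation $\vec v=cP\vec v+\vec e_i$ gives $\vec v_i=1+\frac{c}{\dg{i}}\sum_{j\ch{i}}\vec v_j$, and since all children lie in $\ext$ we have $\vec v_j=\vec v_i\,\vec w_j$, whence
\[
  \vec v_i=\frac{1}{\,1-\frac{c}{\dg{i}}\sum_{j\ch{i}}\vec w_j\,}.
\]
Thus $\vec v_i$ is a strictly increasing function of the average $\frac{1}{\dg{i}}\sum_{j\ch{i}}\vec w_j$ of $\vec w$ over the chosen children. The average of the values $\vec w_j$ over any nonempty set of children is at most $\max_{j\in\ext}\vec w_j$, with equality if and only if every chosen child attains this maximum, i.e.\ lies in $\argmax_{j\in\ext}\vec w_j=\maxV_0$. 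This yields both implications simultaneously: $\prv_i$ is maximal if and only if the target set of $\ens E_\out$ is a nonempty subset of $\maxV_0$.

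The delicate point is the sufficiency direction. For a general set $\intr$ the set $\maxV$ shifts with the chosen $\ens E_\out$ (so one obtains only necessary conditions, as in Theorem~\ref{thm:out-opt}), and the equivalence would fail; it is precisely the singleton fact $\maxV=\maxV_0$ from Lemma~\ref{lem:V=V0-single}, which keeps the target set fixed, that makes every nonempty $\ens E_\out\subseteq\maxV_0$ yield the same maximal average and hence the same maximal $\vec v_i$. I expect the main care to lie in verifying that $\vec w$ is genuinely independent of $\ens E_\out$, so that the averaging comparison is carried out between configurations sharing a common $\vec w$.
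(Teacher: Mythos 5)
Your proposal is correct, and it is more self-contained than what the paper does: the paper states this corollary without proof, the intended argument (from the surrounding text) being to adapt Theorem~\ref{thm:out-opt} --- the singleton $\set{i}$ is the unique final class of $(\intr,\ens E_\intr)$ with $\ens E_{\set{i}}=\emptyset$, so all external outlinks must point into $\maxV$ and their number is unrestricted --- and then to invoke Lemma~\ref{lem:V=V0-single} to replace $\maxV$ by the a~priori computable set $\maxV_0$. Your route shares the same key ingredient (the rank-one structure of $\vec v$ from Lemma~\ref{lem:V=V0-single}, i.e.\ $\vec v=\vec v_i\,\vec w$ with $\vec w$ determined by $\ens E_\inn$ and $\ens E_\ext$ alone), but instead of citing Theorem~\ref{thm:out-opt} you solve the scalar fixed-point equation $\vec v_i=1+\vec v_i\,\tfrac{c}{\dg i}\sum_{j\ch i}\vec w_j$ explicitly and reduce the whole problem to maximizing the average of $\vec w$ over the chosen children. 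What this buys is the \emph{sufficiency} half of the ``if and only if'': Theorem~\ref{thm:out-opt} only yields necessary conditions on an optimal $\ens E_\out$, whereas your closed form shows directly that every nonempty set of targets inside $\argmax_{j\in\ext}\vec w_j=\maxV_0$ attains the same (maximal) value of $\vec v_i$, hence of $\prv_i=(1-c)\vec v_i\,\vec z^T\vec w$. Two small points worth making explicit: the denominator $1-\tfrac{c}{\dg i}\sum_{j\ch i}\vec w_j$ is positive because $\vec w_j\le c$ for $j\in\ext$ by Lemma~\ref{lem:minmaxv}$(a)$; and the inclusion $\ens E_\out\subseteq\maxV_0$ in the statement is an abuse of notation (links versus nodes) which you correctly read as a condition on the set of targets. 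The degenerate case $\ens E_\inn=\emptyset$, where $\vec w_j=0$ on all of $\ext$ and $\maxV_0=\ext$, is also handled consistently by your argument.
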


Let us now consider the problem of maximizing the PageRank $\prv^T\vec e_\intr$ when \emph{several external outlinks are required}.  Then the proof of Theorem~\ref{thm:out-opt} can be adapted readily in order to have the following variant of Theorem~\ref{thm:website-opt}.
\begin{corollary}[Optimal link structure with several external outlinks]
  Let $\ens E_\inn$ and $\ens E_\ext$ be given.
  Let $\ens E_\intr$ and $\ens E_\out$ such that $\prv^T\vec e_\intr$
  is maximal under Assumption~A and assumption that
  $\abs{\ens E_\out}\ge r$.
  Then there exists a permutation of the indices such that $\intr=\set{1,2,\dots,n_\intr}$,
    $\vec v_1>\cdots>\vec v_{{n_\intr}}>\vec v_{{n_\intr}+1}\ge\cdots\ge\vec v_n$,
  and $\ens E_\intr$ and $\ens E_\out$ have the following structure:
  \begin{align*}
    \ens E_\intr&=\set{(i,j)\in\intr\times\intr\colon j< i \text{ or } j=i+1}, \\
    \ens E_\out&=\set{(n_\intr,j_k)\colon j_k\in\maxV \text{ for } k=1,\dots, r}.
  \end{align*}
\end{corollary}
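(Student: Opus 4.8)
The goal here is the final Corollary (Optimal link structure with several external outlinks), which requires $|\ens E_\out| \ge r$. Let me think about how the standard machinery adapts.

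The plan is to mirror the proof of Theorem~\ref{thm:website-opt}, but carry the constraint $\abs{\ens E_\out}\ge r$ through the two-stage argument (outlink structure, then internal structure). The key observation is that the internal-structure results of Theorem~\ref{thm:intr-opt} are insensitive to how many outlinks we are forced to keep, provided they all emanate from a single final class; only the outlink-structure argument of Theorem~\ref{thm:out-opt} must be revisited to accommodate $r$ mandatory links.

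First I would recover the ordering of the $\vec v$-values exactly as in Theorem~\ref{thm:website-opt}: relabel $\ens N$ so that $\intr=\set{1,\dots,n_\intr}$ with $\vec v_1\ge\cdots\ge\vec v_{n_\intr}>\vec v_{n_\intr+1}=\max_{j\in\ext}\vec v_j$. The addition argument (Proposition~\ref{prop:add-link}), which only uses that each node of $\intr$ has access to $\ext$, still forces $(i,j)\in\ens E_\intr$ whenever $\vec v_j\ge\vec v_i$; in particular every node links to node $1$, so the subgraph $(\intr,\ens E_\intr)$ has a unique final class containing node $1$. This is the crucial structural simplification: with a single final class, Theorem~\ref{thm:out-opt} tells us that all external outlinks must originate from the \emph{same} leaking node, namely the unique $\argmin$ over that class, and must point into $\maxV$. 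Since the node minimizing $\vec v$ over $\intr$ is $n_\intr$, and since all these links share this tail, the leaking set $\ens L$ collapses to the singleton $\set{n_\intr}$, and the outlinks are exactly $\set{(n_\intr,j_k)\colon j_k\in\maxV}$ for $r$ choices $j_k$.

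With $\ens L=\set{n_\intr}$ a singleton, I would then invoke Theorem~\ref{thm:intr-opt} to pin down $\ens E_\intr$. Because there is only one leaking node, the ``backward'' set $\set{(i,j)\colon j\le i}$ together with the forward chain $\set{(i,i+1)\colon i\in\intr\setminus\ens L}$ describes $\ens E_\intr^L$, and the extra term $\set{(i,j)\in\ens L\times\ens L\colon i<j}$ in $\ens E_\intr^U$ is empty since $\ens L$ is a singleton. Hence $\ens E_\intr^L=\ens E_\intr^U$, forcing $\ens E_\intr=\set{(i,j)\in\intr\times\intr\colon j<i\text{ or }j=i+1}$, which matches the claimed structure, and the strict ordering $\vec v_1>\cdots>\vec v_{n_\intr}$ follows as in Theorem~\ref{thm:intr-opt}.

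The main obstacle I anticipate is justifying that all $r$ required outlinks must share the tail $n_\intr$ and that each must land in $\maxV$ \emph{simultaneously}, rather than being spread across several nodes of the final class. The subtlety is that $\maxV$ is defined via the optimal $\vec v$, which itself depends on the chosen $\ens E_\out$; one must argue, as in the ``Secondly'' case of the proof of Theorem~\ref{thm:out-opt}, that any outlink leaving a non-minimal node of the final class (or pointing to a node outside $\maxV$) can be rerouted to strictly increase $\prv^T\vec e_\intr$ while preserving both Assumption~A and the cardinality constraint $\abs{\ens E_\out}\ge r$. Because rerouting a link keeps the count fixed, the constraint is never violated, so the same exchange argument via Theorem~\ref{thm:p>p-d>0} and Proposition~\ref{prop:remove-link} applies verbatim; the only genuinely new point is checking that having several parallel outlinks from the single node $n_\intr$ is consistent with optimality, which holds precisely because this node has no self-competing tail, exactly as in the treatment of a singleton final class $\ens F_2$ in Example~\ref{ex:out-opt-ex}.
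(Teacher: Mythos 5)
The paper itself offers no written proof of this corollary (it only asserts that ``the proof of Theorem~\ref{thm:out-opt} can be adapted readily''), and your overall route---re-run the outlink argument under the cardinality constraint, then invoke Theorem~\ref{thm:intr-opt}---is clearly the intended one. However, there is a genuine gap at the decisive step, namely the claim that all $r$ mandatory outlinks emanate from the single node $n_\intr$. You obtain this by citing Theorem~\ref{thm:out-opt} as a black box, but that theorem is proved for \emph{unconstrained} optima and its conclusion $\abs{\ens E_{\mathrm{out}(\ens F_s)}}=1$ (when $\ens E_{\ens F_s}\neq\emptyset$) is flatly incompatible with $\abs{\ens E_\out}\ge r$ for $r\ge2$; so only its proof technique can be reused, not its statement. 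And the technique does not transfer ``verbatim'': the step in that proof which forces a unique leaking node is a link \emph{deletion} (Proposition~\ref{prop:remove-link}), which lowers $\abs{\ens E_\out}$ below $r$ and is therefore unavailable at a constrained optimum with exactly $r$ outlinks. Your appeal to ``rerouting keeps the count fixed'' covers only head-rerouting $(i,j)\mapsto(i,\ell)$ with $\ell\in\maxV$, which is a single-row perturbation and indeed shows every outlink must point into $\maxV$ and that $\abs{\ens E_\out}=r$ exactly. But consolidating outlinks at one tail means replacing $(p,j)$ by $(n_\intr,j')$ with $p\neq n_\intr$, which modifies \emph{two} rows of $P$; Lemma~\ref{lem:pt-pi} and Theorem~\ref{thm:p>p-d>0} do not apply to such a change, and the natural two-step decomposition fails because the intermediate addition $(n_\intr,j')$ with $\vec v_{j'}<\vec v_{n_\intr}$ has $\vec\delta^T\vec v<0$ and strictly \emph{decreases} $\prv^T\vec e_\intr$, so a sign argument alone cannot conclude. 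Ruling out an optimum with two or more leaking nodes, each carrying some of the $r$ required outlinks, therefore needs an argument you have not supplied.

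Two smaller points. First, the strict ordering $\vec v_1>\cdots>\vec v_{n_\intr}$ and the identification $\ens E_\intr^L=\ens E_\intr^U$ are consequences of $\ens L$ being a singleton, so they cannot be harvested from Theorem~\ref{thm:intr-opt} before the gap above is closed (that theorem only gives weak inequalities among the leaking nodes). Second, Theorem~\ref{thm:intr-opt} yields the backward links $\set{(i,j)\colon j\le i}$, i.e.\ including self-links, whereas the corollary as printed has $j<i$; your derivation does not actually ``match the claimed structure'' but rather the $j\le i$ version (the discrepancy is almost certainly a typo in the statement, but it should be flagged rather than silently absorbed).
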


\subsection{External inlinks} Finally, let us make some comments about the addition of {external inlinks} to the set~$\intr$.  It is well known that adding an inlink to a particular page always increases the PageRank of this page~\cite{AL04,IW06}. This can be viewed as a direct consequence of Theorem~\ref{thm:p>p-d>0} and Lemma~\ref{lem:minmaxv}. The case of a set of several pages $\intr$ is not so simple. We prove in the following theorem that, if the set~$\intr$ has a link structure as described in Theorem~\ref{thm:website-opt} then adding an inlink to a page of~$\intr$ from a page $j\in\ext$ which is \emph{not} a parent of some node of~$\intr$ will increase the PageRank of~$\intr$. But in general, adding an inlink to some page of~$\intr$ from $\ext$ may \emph{decrease} the PageRank of the set~$\intr$, as shown in Examples~\ref{ex:add-external-inlinks-conterexample1} and~\ref{ex:add-external-inlinks-conterexample2}.
\begin{theorem}[External inlinks]
  Let $\intr\subseteq\ens N$ and a graph defined by a set of links~$\ens E$.
  If
  \[ \min_{i\in\intr}\vec v_i>\max_{j\notin\intr}\vec v_j, \]
  then,
  for every $j\in\ext$ \new{which is not a parent of $\intr$},
  and for every $i\in\intr$,
  the graph defined by $\widetilde{\ens E}=\ens E\cup\set{(j,i)}$ gives
  $\prvt^T\vec e_\intr>\prv^T\vec e_\intr$.
\end{theorem}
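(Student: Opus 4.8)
The plan is to apply Theorem~\ref{thm:p>p-d>0} with the external node $j$ playing the role of the node whose outlinks are modified. Since $\widetilde{\ens E}=\ens E\cup\set{(j,i)}$ differs from $\ens E$ only in the links starting from $j$, the theorem applies, and it suffices to show that $\vec\delta^T\vec v>0$, where $\vec v=(I-cP)^{-1}\vec e_\intr$ and $\vec\delta$ is the correction to row $j$. Note first that, since $j$ is not a parent of $\intr$, we have $(j,i)\notin\ens E$, so the link is genuinely new and $\dgt{j}=\dg{j}+1$.

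First I would compute $\vec\delta^T\vec v$ explicitly. Adding the single child $i$ to node $j$ gives
\[ \vec\delta=\frac{\vec e_i+\sum_{k\ch{j}}\vec e_k}{\dg{j}+1}-\sum_{k\ch{j}}\frac{\vec e_k}{\dg{j}}, \]
so that
\[ \vec\delta^T\vec v=\frac{\vec v_i+\sum_{k\ch{j}}\vec v_k}{\dg{j}+1}-\sum_{k\ch{j}}\frac{\vec v_k}{\dg{j}}=\frac{\dg{j}\,\vec v_i-\sum_{k\ch{j}}\vec v_k}{\dg{j}(\dg{j}+1)}, \]
where $\dg{j}\ge1$ by the standing assumption that every node has at least one outlink.

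The key step is then to use the two hypotheses to bound each term $\vec v_k$ in the numerator. Since $j$ is not a parent of $\intr$, every child $k\ch{j}$ lies in $\ext$, so $\vec v_k\le\max_{\ell\notin\intr}\vec v_\ell$. Combining this with the separation assumption $\min_{\ell\in\intr}\vec v_\ell>\max_{\ell\notin\intr}\vec v_\ell$ and with $i\in\intr$, we obtain $\vec v_k<\vec v_i$ for every child $k$ of $j$. Summing over the $\dg{j}$ children gives $\sum_{k\ch{j}}\vec v_k<\dg{j}\,\vec v_i$, hence the numerator is strictly positive and $\vec\delta^T\vec v>0$. The conclusion $\prvt^T\vec e_\intr>\prv^T\vec e_\intr$ then follows immediately from Theorem~\ref{thm:p>p-d>0}.

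I do not expect a serious obstacle: the whole argument reduces to evaluating $\vec\delta^T\vec v$ and applying Theorem~\ref{thm:p>p-d>0}. The only point requiring care is the interpretation of the phrase \emph{$j$ is not a parent of $\intr$}, which in the sense of Lemma~\ref{lem:vmax-parents} means that none of the children of $j$ belong to $\intr$; this is precisely what allows the separation hypothesis to be applied simultaneously to every child of $j$, turning the average of the $\vec v_k$ over the children into a quantity strictly below $\vec v_i$.
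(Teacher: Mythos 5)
Your proposal is correct and follows exactly the route the paper intends: the paper's own proof is the one-line remark that the result ``follows directly from Theorem~\ref{thm:p>p-d>0}'', and your computation of $\vec\delta^T\vec v=\bigl(\dg{j}\,\vec v_i-\sum_{k\ch{j}}\vec v_k\bigr)/\bigl(\dg{j}(\dg{j}+1)\bigr)$ together with the observation that all children of $j$ lie in $\ext$ (so the separation hypothesis forces the numerator to be positive) is precisely the omitted verification. No gaps.
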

\begin{proof}
  This follows directly from Theorem~\ref{thm:p>p-d>0}.
\end{proof}
\begin{example}\label{ex:add-external-inlinks-conterexample1}
  Let us show by an example that a new external inlink is not always profitable for a set~$\intr$ in
  order to improve its PageRank, even if~$\intr$ has an optimal linkage strategy.
  Consider for instance the graph in Figure~\ref{fig:add-external-inlinks-conterexample1}.
  With $c=0.85$ and $\vec z$ uniform, we have $\prv^T\vec e_\intr=0.8481$.
  But if we consider the graph defined by $\widetilde{\ens E}_\inn={\ens E}_\inn\cup\set{(3,2)}$,
  then we have $\prvt^T\vec e_\intr=0.8321<\prv^T\vec e_\intr$.
  \begin{figure}[!htb]
  \centering
  {\includegraphics{figures.7}}%
    \caption{For $\intr=\set{1,2}$,
    adding the external inlink~$(3,2)$ gives $\prvt^T\vec e_\intr<\prv^T\vec e_\intr$
    (see Example~\ref{ex:add-external-inlinks-conterexample1}).}
    \label{fig:add-external-inlinks-conterexample1}
  \end{figure}
\end{example}
\begin{example}\label{ex:add-external-inlinks-conterexample2}
  A new external inlink does not not always increase the PageRank of a set~$\intr$ in
  even if this new inlink comes from a page which is not already a parent of some node of~$\intr$.
  Consider for instance the graph in Figure~\ref{fig:add-external-inlinks-conterexample2}.
  With $c=0.85$ and $\vec z$ uniform, we have $\prv^T\vec e_\intr=0.6$.
  But if we consider the graph defined by $\widetilde{\ens E}_\inn={\ens E}_\inn\cup\set{(4,3)}$,
  then we have $\prvt^T\vec e_\intr=0.5897<\prv^T\vec e_\intr$.
  \begin{figure}[!htb]
  \centering
  {\includegraphics{figures.8}}%
    \caption{For $\intr=\set{1,2,3}$,
    adding the external inlink~$(4,3)$ gives $\prvt^T\vec e_\intr<\prv^T\vec e_\intr$
    (see Example~\ref{ex:add-external-inlinks-conterexample2}).}
    \label{fig:add-external-inlinks-conterexample2}
  \end{figure}
\end{example}

\section{Conclusions}

In this paper we provide the general shape of an optimal link structure for a website in order to maximize its PageRank. This structure with a forward chain and every possible backward links may be not intuitive. At our knowledge, it \new{has never been mentioned, while topologies} like a clique, a ring or a star are considered in the literature on collusion and alliance between pages~\cite{BYCL05,GGM05}. Moreover, this optimal structure gives new insight into the affirmation of Bianchini et al.~\cite{BGS05} that, in order to maximize the PageRank of a website, hyperlinks to the rest of the webgraph ``should be in pages with a small PageRank and that have many internal hyperlinks''. More precisely, we have seen that the leaking pages must be choosen with respect to the mean number of visits before zapping they give to the website, rather than their PageRank.

\medskip
Let us now present some possible directions for future work.

We have noticed in Example~\ref{ex:optimal-link-structure-conterexample} that the first node of $\intr$ in the forward chain of an optimal link structure is not \new{necessarily} \emph{a child} of some node of $\ext$. In the example we gave, the personalization vector was not uniform. We wonder if this could occur with a uniform personalization vector and make the following conjecture.
\begin{conjecture}
  Let $\ens E_\inn\new{\neq\emptyset}$ and $\ens E_\ext$ be given.
  Let $\ens E_\intr$ and $\ens E_\out$ such that $\prv^T\vec e_\intr$ is maximal
  under Assumption~A.
  If $\vec z=\frac{1}{n}\vun$, then there exists $j\in\ext$ such that $(j,i)\in\ens E_\inn$,
  where $i\in\argmax_{k}\vec v_k$.
\end{conjecture}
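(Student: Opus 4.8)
The result is stated as a conjecture because, although Theorem~\ref{thm:website-opt} pins down the optimal \emph{shape}, it says nothing about \emph{which} physical node of $\intr$ sits at the head of the chain; resolving this requires a global comparison that I have not managed to complete, and the plan below explains both the natural line of attack and where it stalls. By Theorem~\ref{thm:website-opt} I may assume the optimal graph is the forward chain with all backward links on $\intr=\set{1,\dots,n_\intr}$, leaking exactly once from node~$n_\intr$, and that
\[
\vec v_1>\cdots>\vec v_{n_\intr}>\vec v_{n_\intr+1}\ge\cdots\ge\vec v_n,
\]
so that $\argmax_k\vec v_k=\set 1$ and the conjecture is precisely the claim that the head node~$1$ receives an external inlink. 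The role of the hypothesis $\vec z=\tfrac1n\vun$ is that, by~\eqref{eq:PR-zv}, maximizing $\prv^T\vec e_\intr$ is then equivalent to maximizing the \emph{total} return value $\sum_{k=1}^n\vec v_k=\vun^T\vec v$, with no node favoured by the personalization vector; Example~\ref{ex:optimal-link-structure-conterexample} shows this democratic weighting is essential, since a strongly non-uniform $\vec z$ can pull a node with no external inlink to the head.

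Once the shape is fixed, the only remaining freedom is the assignment of the physical nodes of $\intr$ (together with their \emph{fixed} external inlinks $\ens E_\inn$) to the chain positions $1,\dots,n_\intr$ and the choice of leaking target in $\ext$. Projecting the stationarity identity $\prv^T=c\,\prv^TP+\tfrac{1-c}{n}\vun^T$ onto $\vec e_\intr$ and using the chain structure to evaluate $\prv^TP\vec e_\intr$ yields
\[
\prv^T\vec e_\intr=\frac{n_\intr}{n}+\frac{c}{1-c}\left(\sum_{(j,k)\in\ens E_\inn}\frac{\prv_j}{\dg j}-\frac{\prv_{n_\intr}}{n_\intr+1}\right),
\]
which isolates the net external flow: the objective is controlled by how much probability re-enters $\intr$ along $\ens E_\inn$ minus the amount leaking at node~$n_\intr$. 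The guiding intuition is that a surfer should leak from the \emph{worst} node (node~$n_\intr$, lowest $\vec v$) and re-enter at the \emph{best} node (node~$1$, highest $\vec v$), because the expected number of subsequent visits to $\intr$ grows with $\vec v$; hence the fixed inlinks of $\ens E_\inn$ ought, in the optimum, to feed the head, which is exactly the conjecture.

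To convert this into a proof I would argue by contradiction. Assume node~$1$ has no external inlink; since $\ens E_\inn\neq\emptyset$, some node $p\in\intr$ does. I would compare the current graph with the one obtained by reassigning physical nodes so that a node carrying an external inlink occupies the head, aiming at $\prvt^T\vec e_\intr>\prv^T\vec e_\intr$. The honest difficulty is that such a reassignment is \emph{not} a rank-one perturbation: reordering the chain rewrites several rows of $P$ at once, so the rank-one tools Lemma~\ref{lem:pt-pi} and Theorem~\ref{thm:p>p-d>0} that powered Theorems~\ref{thm:out-opt} and~\ref{thm:intr-opt} no longer apply directly. One could try to realize the reassignment as a sequence of single-link edits and track the sign of $\vec\delta^T\vec v$ step by step, but the intermediate graphs need not retain the chain shape nor satisfy Assumption~A, and $\vec v$ shifts after each edit, so keeping all the signs under control is where the argument breaks down.

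The crux is thus a monotonicity statement --- that routing the fixed return flow $\ens E_\inn$ into higher-$\vec v$ nodes increases $\vun^T\vec v$ --- for which I see no purely local certificate, since the feedback through the single link $(n_\intr,n_\intr+1)$ couples the external entries of $\vec v$ (see~\eqref{eq:vintr-vext}) back into the internal chain and makes the dependence genuinely global. Proving this monotonicity, presumably through a careful perturbation analysis of $(I-cP)^{-1}$ under simultaneous row changes that exploits the uniform weighting to symmetrize the comparison, is the step I expect to be the true obstacle, and is why the statement is offered only as a conjecture.
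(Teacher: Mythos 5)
The statement you were asked about is labelled a \emph{conjecture} in the paper, and the paper supplies no proof of it; your proposal, quite properly, does not claim one either, so there is no argument of the authors' against which to measure yours. What you do assert is sound. The reduction via Theorem~\ref{thm:website-opt} --- that after relabelling, $\argmax_k\vec v_k=\set{1}$ and the conjecture is exactly the claim that the head of the chain has a parent in $\ext$ --- is the correct reading of the statement, and the role you assign to $\vec z=\frac1n\vun$ via~\eqref{eq:PR-zv} and Example~\ref{ex:optimal-link-structure-conterexample} matches the authors' own commentary. The identity
\[
\prv^T\vec e_\intr=\frac{n_\intr}{n}+\frac{c}{1-c}\Bigl(\sum_{(j,k)\in\ens E_\inn}\frac{\prv_j}{\dg j}-\frac{\prv_{n_\intr}}{n_\intr+1}\Bigr)
\]
is a correct consequence of stationarity and the optimal shape (the head-node coefficient of $P\vec e_\intr$ is $1$ for $i<n_\intr$ and $n_\intr/(n_\intr+1)$ for the leaking node), and it is a nice quantitative form of the inflow-versus-leakage intuition that does not appear in the paper. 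Your diagnosis of the obstruction is also the right one: once the shape is fixed, the remaining freedom is which physical node, with its immovable inlinks from $\ens E_\inn$, occupies the head position, and exchanging two such assignments rewrites several rows of $P$ simultaneously, so the rank-one comparison machinery of Lemma~\ref{lem:pt-pi} and Theorem~\ref{thm:p>p-d>0} --- the only tool the paper develops for proving strict improvements --- does not apply; this is precisely why the authors leave the statement open. The one caution I would add is not to oversell the flow identity as ``almost'' closing the argument: the term $\sum_{(j,k)\in\ens E_\inn}\prv_j/\dg j$ depends on the entire assignment through $\prv$ itself, so the monotonicity you would need is genuinely global, exactly as you say in your final paragraph.
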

If this conjecture was true we could also ask if the node $j\in\ext$ such that $(j,i)\in\ens E_\inn$ where $i\in\argmax_{k}\vec v_k$ belongs to $\maxV$.

\medskip
Another question concerns the optimal linkage strategy in order to maximize an arbitrary linear combination of the PageRanks of the nodes of~$\intr$. In particular, we could want to maximize the PageRank $\prv^T\vec e_{\ens S}$ of \emph{a target subset} $\ens S\subseteq\intr$ by choosing $\ens E_\intr$ and $\ens E_\out$ as usual. A general shape for an optimal link structure seems difficult to find, as shown in the following example. 
\begin{example}\label{ex:target-set-examples}
  Consider the graphs in Figure~\ref{fig:target-set-examples}.
  In both cases, let $c=0.85$ and $\vec z=\frac{1}{n}\vun$.
  Let $\intr=\set{1,2,3}$ and let $\ens S=\set{1,2}$ be the target set.
  In the configuration~(a), the optimal sets of links $\ens E_\intr$ and $\ens E_\out$
  for maximizing $\prv^T\vec e_\ens S$ has the link structure
  described in Theorem~\ref{thm:website-opt}. But in~(a), the optimal $\ens E_\intr$ and $\ens E_\out$
  do not have this structure. Let us note nevertheless that, by Theorem~\ref{thm:website-opt},
  the subsets $\ens E_{\ens S}$ and $\ens E_{\mathrm{out}(\ens S)}$ must have the link structure
  described in Theorem~\ref{thm:website-opt}.
  \begin{figure}[!htb]
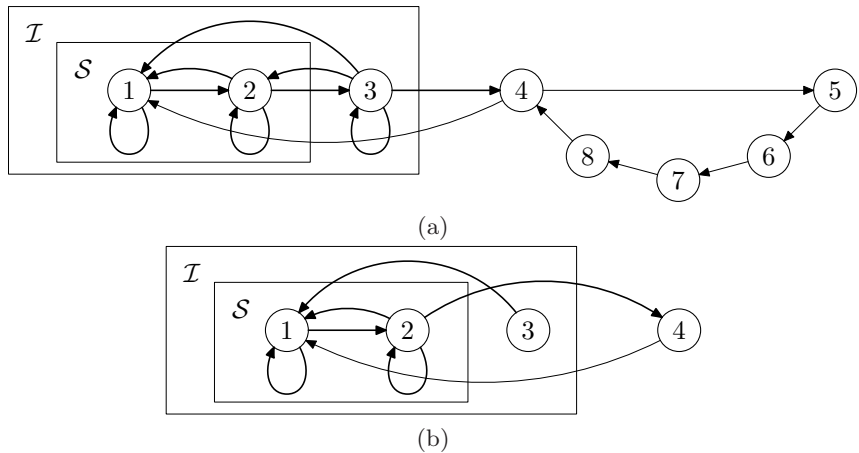

  \centering
  \subfloat[][]{\includegraphics{figures.10}}%
  \qquad\qquad
  \subfloat[][]{\includegraphics{figures.9}}%
    \caption{In~(a) and~(b), bold arrows represent
    optimal link structures for $\intr=\set{1,2,3}$ with respect to a target set $\ens S=\set{1,2}$
    (see Example~\ref{ex:target-set-examples}).}
    \label{fig:target-set-examples}
  \end{figure}
\end{example}

\section*{Acknowledgements}
This paper presents research supported by a grant ``Actions de recherche concert\'ees -- Large Graphs and Networks'' of the ``Communaut\'e Fran\c caise de Belgique'' and by the Belgian Network DYSCO (Dynamical Systems, Control, and Optimization), funded by the Interuniversity Attraction Poles Programme, initiated by the Belgian State, Science Policy Office.
The second author was supported by a research fellow grant of the ``Fonds de la Recherche Scientifique -- FNRS'' (Belgium).
The scientific responsibility rests with the authors.

\providecommand{\bysame}{\leavevmode\hbox to3em{\hrulefill}\thinspace}
\providecommand{\href}[2]{#2}


\begin{thebibliography}{10}

\bibitem{AL04}
Konstantin Avrachenkov and Nelly Litvak, \emph{Decomposition of the {G}oogle
  {P}age{R}ank and optimal linking strategy}, Tech. report, INRIA, 2004,
  \verb"http://www.inria.fr/rrrt/rr-5101.html".

\bibitem{AL06}
\bysame, \emph{The effect of new links on {G}oogle {P}age{R}ank}, Stoch. Models
  \textbf{22} (2006), no.~2, 319--331.

\bibitem{BYCL05}
Ricardo Baeza-Yates, Carlos Castillo, and Vicente L{\'o}pez, \emph{{P}age{R}ank
  increase under different collusion topologies}, First International Workshop
  on Adversarial Information Retrieval on the Web, 2005,
  \verb"http://airweb.cse.lehigh.edu/2005/baeza-yates.pdf".

\bibitem{BP94}
Abraham Berman and Robert~J. Plemmons, \emph{Nonnegative matrices in the
  mathematical sciences}, Classics in Applied Mathematics, vol.~9, Society for
  Industrial and Applied Mathematics (SIAM), Philadelphia, PA, 1994.

\bibitem{BGS05}
Monica Bianchini, Marco Gori, and Franco Scarselli, \emph{Inside {P}age{R}ank},
  ACM Trans. Inter. Tech. \textbf{5} (2005), no.~1, 92--128.

\bibitem{BP98}
Sergey Brin and Lawrence Page, \emph{The anatomy of a large-scale hypertextual
  web search engine}, Computer Networks and ISDN Systems \textbf{30} (1998),
  no.~1--7, 107--117, Proceedings of the Seventh International World Wide Web
  Conference, April 1998.

\bibitem{CM00}
Grace~E. Cho and Carl~D. Meyer, \emph{Markov chain sensitivity measured by mean
  first passage times}, Linear Algebra Appl. \textbf{316} (2000), no.~1-3,
  21--28, Conference Celebrating the 60th Birthday of Robert J. Plemmons
  (Winston-Salem, NC, 1999).

\bibitem{GGM05}
Zolt\'an Gy\"ongyi and Hector Garcia-Molina, \emph{Link spam alliances},
  VLDB~'05: Proceedings of the 31st international conference on Very large data
  bases, VLDB Endowment, 2005,
  \verb"http://portal.acm.org/citation.cfm?id=1083654", pp.~517--528.

\bibitem{IW06}
Ilse C.~F. Ipsen and Rebecca~S. Wills, \emph{Mathematical properties and
  analysis of {G}oogle's {P}age{R}ank}, Bol. Soc. Esp. Mat. Apl. \textbf{34}
  (2006), 191--196.

\bibitem{KS60}
John~G. Kemeny and J.~Laurie Snell, \emph{Finite {M}arkov chains}, The
  University Series in Undergraduate Mathematics, D. Van Nostrand Co., Inc.,
  Princeton, N.J.-Toronto-London-New York, 1960.

\bibitem{Kir06}
Steve Kirkland, \emph{Conditioning of the entries in the stationary vector of a
  {G}oogle-type matrix}, Linear Algebra Appl. \textbf{418} (2006), no.~2-3,
  665--681.

\bibitem{LM04}
Amy~N. Langville and Carl~D. Meyer, \emph{Deeper inside {P}age{R}ank}, Internet
  Math. \textbf{1} (2004), no.~3, 335--380.

\bibitem{LM06book}
\bysame, \emph{Google's {P}age{R}ank and beyond: the science of search engine
  rankings}, Princeton University Press, Princeton, NJ, 2006.

\bibitem{LM05}
Ronny Lempel and Shlomo Moran, \emph{Rank-stability and rank-similarity of
  link-based web ranking algorithms in authority-connected graphs.}, Inf. Retr.
  \textbf{8} (2005), no.~2, 245--264.

\bibitem{PBMW98}
Lawrence Page, Sergey Brin, Rajeev Motwani, and Terry Winograd, \emph{The
  {P}age{R}ank citation ranking: Bringing order to the {W}eb}, Tech. report,
  Computer Science Department, Stanford University, 1998,
  \verb"http://dbpubs.stanford.edu:8090/pub/1999-66".

\bibitem{Sen81}
Eugene Seneta, \emph{Nonnegative matrices and {M}arkov chains}, second ed.,
  Springer Series in Statistics, Springer-Verlag, New York, 1981.

\bibitem{Syd05}
Marcin Sydow, \emph{Can one out-link change your {P}age{R}ank?}, Advances in
  Web Intelligence (AWIC 2005), Lecture Notes in Computer Science, vol. 3528,
  Springer Berlin Heidelberg, 2005, pp.~408--414.

\end{thebibliography}
\end{document}